\documentclass[12pt]{article}

\usepackage[margin=1in]{geometry}
\usepackage{amsmath,amssymb,amsthm}
\usepackage{mathtools}
\usepackage{booktabs}
\usepackage{array}
\usepackage{tabularx}
\usepackage{setspace}
\usepackage{parskip}
\usepackage{xcolor}
\usepackage{bm}
\usepackage{multirow}
\usepackage{natbib}
\usepackage{hyperref}
\usepackage{cleveref}
\usepackage{enumitem}
\usepackage{tikz}
\usetikzlibrary{arrows.meta,positioning,shapes.geometric,calc,decorations.pathreplacing}
\usepackage{caption}
\usepackage[skins,breakable]{tcolorbox}

\newif\ifmarkup
\markupfalse

\newenvironment{markupbox}
  {\ifmarkup
     \begin{tcolorbox}[colframe=red!75!black, colback=white, boxrule=0.8pt,
                       arc=2pt, left=4pt, right=4pt, top=2pt, bottom=2pt,
                       breakable]\fi}
  {\ifmarkup\end{tcolorbox}\fi}

\newtheorem{assumption}{Assumption}
\newtheorem{theorem}{Theorem}
\newtheorem{remark}{Remark}
\newtheorem{proposition}{Proposition}
\newtheorem{lemma}{Lemma}

\newcommand{\Oijt}{O_{ijt}}
\newcommand{\Wijt}{W_{ijt}}
\newcommand{\Ajt}{A_{jt}}
\newcommand{\Sijt}{S_{ijt}}
\newcommand{\Dijt}{\Delta_{ijt}}
\newcommand{\Yijt}{Y_{ijt}}
\newcommand{\gA}{g_A}
\newcommand{\gD}{g_\Delta}

\newcommand{\QY}{\bar{Q}_Y}
\newcommand{\Qint}{\bar{Q}_{\mathrm{int}}}

\newcommand{\Psitrue}{\Psi(P_0)}
\newcommand{\PsiAIPW}{\hat{\Psi}_{\mathrm{AIPW}}}
\newcommand{\Tov}{\mathcal{T}_{\mathrm{ov}}}
\newcommand{\indov}{\mathbf{1}\!\left\{t \in \Tov\right\}}

\newcommand{\Psiahat}{\hat{\Psi}_a}
\newcommand{\EICj}{\mathrm{EIC}_j}
\newcommand{\EICbar}{\overline{\mathrm{EIC}}}
\newcommand{\Dstar}{D^{*}}

\newcommand{\EE}{\mathbb{E}}
\newcommand{\oP}{o_P}
\newcommand{\iid}{\overset{\mathrm{i.i.d.}}{\sim}}
\newcommand{\logit}{\operatorname{logit}}
\newcommand{\calT}{\mathcal{T}}

\title{Semiparametric Estimation of Delayed-Outcome Treatment Effects\\
Using Short-Term Surrogates under Administrative Censoring}
\author{
  Lin Li\thanks{Department of Biostatistics, University of California San Diego}.   \and
  Tuo Lin\thanks{Department of Biostatistics, University of Florida.}\and
  Hao Wu \and
 Yiwen Chen\and
  Xin M.\ Tu\footnotemark[1]
}

\date{March 2026}
\begin{document}
\maketitle

\begin{abstract}
The multi-site registry studies, such as Stepped-wedge cluster-randomized trials (SW-CRT), staggered-enrollment RCTs, etc., share a structural feature: the primary long-term outcome is administratively censored for a
non-negligible fraction of units, with censoring driven by calendar
design rather than by the outcome itself. Standard
inverse-probability-of-censoring weighting becomes unstable when
observation probabilities $\gD$ concentrate near zero for
late-crossing units, while parametric mixed-model analyses discard
the information in any short-term intermediate measurement and rely
on correct specification of the secular time trend.

We study semiparametric estimation of the average treatment effect when a short-term surrogate, which is observed for all units and conditionally independent of the censoring mechanism given baseline covariates, is available. Identification takes a nested-integral form in which the outcome regression is marginalized over the conditional surrogate distribution, so the observation mechanism does not enter the target functional as an inverse weight. We show that a density-plug-in one-step debiased machine-learning construction for this functional leaves a second-order cross-product remainder $R_{SY}$ that has no doubly-robust complement in the efficient influence function and is not eliminated by cross-fitting
(Proposition~\ref{prop:dml_remainder}). We propose a
surrogate-assisted AIPW estimator (SA-AIPW) that integrates over the empirical surrogate distribution through treatment weighting rather than estimating the conditional surrogate density, and so structurally avoids $R_{SY}$. For clustered data, the estimator is shown to be $\sqrt{J}$-consistent and asymptotically linear under a product-rate double-robustness condition (Theorem~\ref{thm:asymp}). A leave-one-cluster-out jackknife provides finite-sample variance calibration in regimes where the cluster-robust sandwich underestimates variability (Theorem~\ref{thm:jack}).

Simulations show near-zero bias and jackknife coverage of
0.956--0.968 across $J=30$--$100$, while inverse-weighted
competitors collapse under high censoring. A design-calibrated
illustration of the Washington State EPT trial confirms these
properties on a real stepped-wedge geometry.
\end{abstract}

\section{Introduction}
\label{sec:intro}

Administrative censoring arises whenever the primary endpoint requires
longer follow-up than all enrolled units can contribute by a fixed
analysis date. The Washington State expedited partner therapy (EPT)
trial \citep{golden2015uptake} provides a sharp example. The trial
randomized $J=23$ local health jurisdictions in a four-wave
stepped-wedge design to a public-health program promoting free
patient-delivered partner therapy, with 12-month chlamydia positivity
as the primary outcome. By design, jurisdictions crossing over in the
final wave had fewer than 12 months of post-crossover follow-up before
the administrative end date. In the design-calibrated version studied
in Section~\ref{sec:data}, wave-4 jurisdictions therefore have an
administrative censoring rate of 86\%, and 33.7\% of records are
missing the 12-month endpoint overall. At the same time, EPT uptake
within 30 days of diagnosis is observed for all subjects regardless
of wave. The statistical problem is not generic missing-data
adjustment. Missingness is induced by calendar design, and a
short-term post-treatment surrogate is fully observed --- two features
the standard toolkit does not exploit.

Standard tools struggle in exactly this regime.
Inverse-probability-of-censoring weighting (IPCW) upweights observed
outcomes by the reciprocal observation probability, which becomes
near zero by design for late-crossing clusters; the resulting weights
can inflate variance severely and, near the boundary, compromise
regular asymptotic approximations
\citep{robins1994estimation,bang2005doubly}. Parametric mixed-model
analyses use the observed primary outcomes directly, but their
validity depends on correct specification of secular time trends and
other outcome-model structures \citep{hussey2007design,hemming2020reflection,hughes2015current}, and
they do not naturally exploit a fully observed short-term surrogate.
A natural debiased machine-learning construction avoids unstable
inverse weights, but, as we show in
Proposition~\ref{prop:dml_remainder}, cross-fitting alone does not
remove the residual second-order bias created by the nested
surrogate structure.

The EPT trial is the motivating example, but the problem is broader.
The same design pattern appears in staggered-enrollment randomized
trials analyzed at a fixed calendar date, registry studies with
delayed long-term ascertainment, and other stepped-wedge or
staggered-rollout settings in which a delayed primary outcome is
administratively censored while an earlier surrogate is broadly
observed. The practical question in such studies is whether the
surrogate can be used to recover information about the delayed
outcome without placing the observation mechanism in a denominator.

Under the condition that, given baseline covariates, treatment, and
calendar time, the delayed outcome is independent of censoring after
conditioning on the surrogate (Assumption~\ref{ass:mar}),
identification takes a \emph{nested bridge} form: the complete-case
outcome regression is marginalized over the treatment-specific
conditional distribution of the surrogate rather than reweighted by
the reciprocal observation probability. This representation avoids
the near-boundary instability of IPCW, but creates a distinctive
semiparametric obstruction. If one estimates the outcome regression
and the conditional surrogate law separately and plugs them into the
bridge, the resulting one-step debiased estimator leaves a
second-order cross-product remainder involving both nuisance
components. Unlike the standard doubly robust remainder, this term
has no complementary component in the efficient influence function
whose consistency would force it to vanish.
Section~\ref{sec:semiparam} makes this obstruction precise and shows
why cross-fitting by itself does not resolve it.

Our answer is a surrogate-assisted AIPW estimator (SA-AIPW)
constructed directly through an augmented estimating equation. The
estimator uses the fully observed surrogate to reconstruct the
treatment-specific long-term outcome regression without inverse
observation weights, and integrates over the empirical surrogate
distribution by treatment weighting rather than by plugging in an
estimate of the conditional surrogate density. As a structural
consequence of this construction, the cross-product remainder
$R_{SY}$ that obstructs a density-plug-in approach does not enter
the von Mises expansion at all, and no rate condition on $\hat f_S$
appears in the asymptotic theory. The resulting procedure retains the
appeal of modern doubly robust estimation --- compatibility with
flexible nuisance fitting, direct targeting of the estimand, and
$\sqrt{J}$-consistent inference under product-rate conditions ---
while adapting it to a setting in which the main obstacle is neither
ordinary confounding adjustment nor ordinary missingness, but a
nested surrogate-assisted identification strategy under design-induced
administrative censoring.

This paper makes four contributions. First, we identify a class of
administratively censored trials and registry settings in which a
short-term surrogate permits estimation of the average treatment
effect without making the target functional an inverse-censoring
average, even in near-boundary regimes such as the 86\% wave-4
censoring of the motivating trial.
Second, we isolate the nonstandard obstruction faced by a naive
density-plug-in debiased-machine-learning approach: a second-order
cross-product remainder $R_{SY}$ involving the surrogate law that
has no doubly robust complement (Proposition~\ref{prop:dml_remainder})
and that cross-fitting alone does not control to $o_P(J^{-1/2})$.
Third, we propose the SA-AIPW estimator of Section~\ref{sec:tmle},
whose construction structurally avoids $R_{SY}$ without estimating
the conditional surrogate density; Theorem~\ref{thm:asymp} establishes
$\sqrt{J}$-consistent asymptotic linearity under a product-rate
double-robustness condition on the relevant nuisance estimators.
Fourth, because valid inference in clustered cross-fitted
augmented-IPW estimators is itself nontrivial, we show that
cluster-level influence contributions aggregate by summation rather
than averaging (Lemma~\ref{lem:summation}), and we provide a
leave-one-cluster-out jackknife procedure that delivers near-nominal
finite-sample coverage in regimes where the cluster-robust sandwich
underestimates variability (Theorem~\ref{thm:jack}); empirically,
this jackknife delivers coverage of 0.956--0.968 across $J\geq 30$,
while the first-order sandwich interval undercovers at 0.87--0.89
and inverse-weighted competitors deteriorate further under heavy
censoring.

Our work connects several literatures while departing from each in
a specific way. Classical AIPW theory
\citep{robins1994estimation,bickel1993efficient} characterizes
efficient estimation with missing outcomes, but does not address
the nested surrogate-bridge structure studied here. The
surrogate-outcome literature
\citep{gilbert2008evaluating,frangakis2002principal} focuses on a
different problem --- using surrogates when the primary outcome is
rare or only partially observed for design reasons other than
delayed follow-up --- and does not target the near-boundary
administrative-censoring regime. More recent work on surrogate-based
semiparametric estimation
\citep{athey2025surrogate,kallus2025surrogates} occupies a nearby
functional class, but does not address the specific cross-product
obstruction isolated here or the clustered stepped-wedge setting
that motivates our construction. Finally, cluster-level TMLE
\citep{balzer2015adaptive,balzer2023two} provides the appropriate
clusterwise inferential template, but not for the nested bridge
functional considered in this paper.

The remainder of the paper is organized as follows.
Section~\ref{sec:data_id} formalizes the observed-data structure and
establishes nonparametric identification through the surrogate-bridge
representation. Section~\ref{sec:semiparam} develops the
semiparametric theory, including the efficient influence-function
decomposition, the vanishing censoring-mechanism component, the
cluster-level summation rule, and Proposition~\ref{prop:dml_remainder}
characterizing the residual cross-product remainder.
Section~\ref{sec:tmle} constructs the SA-AIPW estimator.
Section~\ref{sec:asymptotics} establishes asymptotic linearity,
double robustness, non-asymptotic coverage control, and jackknife
variance consistency. Section~\ref{sec:sim} reports simulation
results. Section~\ref{sec:data} returns to the Washington State EPT
trial as a design-calibrated illustration.


\section{Data Structure and Identification}
\label{sec:data_id}

We now formalize the observed-data structure and identification strategy for the
general nested surrogate-bridge problem described in Section~\ref{sec:intro}.
The framework applies whenever a delayed primary outcome is administratively
censored for a subset of units at analysis time and an earlier surrogate is
broadly observed. We instantiate it throughout in the stepped-wedge
cluster-randomized setting, which makes the positivity boundary concrete and
provides the simulation and illustration target of later sections; the
identification theorem and semiparametric theory of Sections~\ref{sec:id}--\ref{sec:semiparam}
are stated for the general structure.

Consider a SW-CRT with $J$ clusters, $T$ calendar time steps, and $n_j$ individuals
in cluster $j$ ($N = \sum_j n_j$ total). Each cluster crosses over once at a
design-determined time $\tau_j \in \{2,\ldots,T\}$, so $\Ajt = I(t \ge \tau_j)$;
crossover times are assigned uniformly at random and $A_{j1}=0$ for all $j$.
The observed data tuple is
\begin{equation}
  \label{eq:data}
  \Oijt = \bigl(\Wijt,\;\Ajt,\;\Sijt,\;\Dijt,\;\Dijt\cdot\Yijt\bigr) \sim P_0,
\end{equation}
where $\Wijt \in \mathbb{R}^p$ are baseline covariates; $\Sijt$ is the short-term
surrogate; $\Dijt \in \{0,1\}$ indicates whether the delayed primary outcome was
observed by the administrative end date; and $\Yijt$ is that outcome.
The product $\Dijt\cdot\Yijt$ makes explicit that $Y$ is structurally missing for
late-crossing clusters.

Let $\mathbf{U}=(U_W,\,U_A,\,b_j,\,U_S,\,U_\Delta,\,U_Y)$ denote the
exogenous noise variables. The cluster-level random effect $b_j \iid \mathcal{N}(0,\sigma^2_b)$ is i.i.d.\ across clusters. All other components are individual-level noise
terms assumed mutually independent across individuals and clusters.

The following system defines the endogenous variables in their causal
time-ordering:
\begin{align}
  \Wijt   &= f_W(U_{W,ij}),
    &&\hspace{-4em}\text{(baseline covariates)}\label{eq:scm_W}\\[2pt]
  \Ajt    &= I(t\ge\tau_j),\quad\tau_j = f_A(U_{A,j}),
    &&\hspace{-4em}\text{(treatment; deterministic given $\tau_j$)}\label{eq:scm_A}\\[2pt]
  \Sijt   &= f_S(\Wijt,\,\Ajt,\,b_j,\,t,\,U_{S,ijt}),
    &&\hspace{-4em}\text{(short-term surrogate)}\label{eq:scm_S}\\[2pt]
  \Dijt   &= f_\Delta(\Sijt,\,\Ajt,\,\Wijt,\,t,\,U_{\Delta,ijt}),
    &&\hspace{-4em}\text{(observation indicator)}\label{eq:scm_D}\\[2pt]
  \Yijt   &= f_Y(\Wijt,\,\Ajt,\,\Sijt,\,b_j,\,t,\,U_{Y,ijt}).
    &&\hspace{-4em}\text{(delayed primary outcome)}\label{eq:scm_Y}
\end{align}

Three structural features deserve emphasis. First, $\Ajt$ is deterministic
given $\tau_j$; the treatment mechanism is \emph{known by design} and
$g_A(1\mid t)=(t-1)/(T-1)$ is plugged in directly rather than estimated.
Second, the cluster random effect $b_j$ enters both $f_S$ and $f_Y$,
inducing the ICC among all observations within cluster $j$ (Remark~\ref{rem:icc}).
Third --- and most critically --- $Y$ does not appear in $f_\Delta$ and
$U_\Delta$ does not appear in $f_Y$. This is the structural encoding of
Assumption~\ref{ass:mar}: conditional on $S$, the censoring indicator
$\Delta$ is independent of the outcome $Y$. In DAG terms, there is no
directed edge $\Yijt\to\Dijt$ (Figure~\ref{fig:dag}).

\begin{figure}[ht]
\centering
\begin{tikzpicture}[
  every node/.style={circle, draw, minimum size=1.0cm, inner sep=1pt,
                     font=\small\strut},
  exog/.style={fill=blue!12, draw=blue!55!black},
  endog/.style={fill=white, draw=red!65!black, thick},
  >=Stealth,
  arr/.style={->, semithick},
  earr/.style={->, thin, gray!60},
  icc/.style={->, semithick, violet!80!black},
  every label/.style={font=\footnotesize},
]
\node[exog] (UW)  at (0.0, 4.5)  {$U_W$};
\node[exog] (UA)  at (3.0, 4.5)  {$U_A$};
\node[exog] (US)  at (6.0, 4.5)  {$U_S$};
\node[exog] (UD)  at (9.0, 4.5)  {$U_\Delta$};
\node[exog] (UY)  at (12.8,0.8)  {$U_Y$};
\node[exog] (t)   at (0.0, 0.0)  {$t$};
\node[exog] (bj)  at (0.0,-1.8)  {$b_j$};
\node[endog] (W)  at (0.0, 2.5)  {$W_{ijt}$};
\node[endog] (A)  at (3.0, 2.5)  {$A_{jt}$};
\node[endog] (S)  at (6.0, 2.5)  {$S_{ijt}$};
\node[endog] (D)  at (9.0, 2.5)  {$\Delta_{ijt}$};
\node[endog] (Y)  at (9.0, 0.5)  {$Y_{ijt}$};
\draw[earr] (UW)--(W);
\draw[earr] (UA)--(A);
\draw[earr] (US)--(S);
\draw[earr] (UD)--(D);
\draw[earr] (UY)--(Y);
\draw[arr,red!65!black] (W)--(A);
\draw[arr,red!65!black] (W)--(S);
\draw[arr,red!65!black] (W) to[out=22,in=158] (D);
\draw[arr,red!65!black] (W) to[out=-25,in=185] (Y);
\draw[arr,blue!55!black] (t) to[out=15,in=225] (A);
\draw[arr,blue!55!black] (t) to[out=18,in=218] (S);
\draw[arr,blue!55!black] (t) to[out=20,in=210] (D);
\draw[arr,blue!55!black] (t)--(Y);
\draw[arr] (A)--(S);
\draw[arr] (A) to[out=18,in=162] (D);
\draw[arr] (A) to[out=-28,in=148] (Y);
\draw[arr] (S)--(D);
\draw[arr] (S) to[out=-28,in=122] (Y);
\draw[icc] (bj) to[out=35,in=242] (S);
\draw[icc] (bj) to[out=55,in=212] (Y);
\node[draw=red!65!black, fill=yellow!25, rectangle, rounded corners=2pt,
      minimum size=0pt, inner sep=3pt, font=\footnotesize\itshape]
  (nde) at (10.8, -0.8) {\shortstack{No direct\\edge (MAR)}};
\draw[->,dashed,red!70!black,thin] (nde.north)--(9.5,1.25);
\node[draw=none, fill=none, rectangle, font=\footnotesize\itshape,
      minimum size=0pt] at (5.5,-3.2)
  {Time ordering:\quad $W\;\to\; A\;\to\; S\;\to\;\Delta\;\to\; Y$};
\end{tikzpicture}
\caption{Directed Acyclic Graph for the SW-CRT Structural Causal Model.
  \textit{Red} circles: endogenous variables.
  \textit{Blue} circles: exogenous and design variables.
  \textit{Violet} arrows: cluster random effect $b_j$, the source of ICC.
  The critical structural feature is the \textbf{absence} of a directed edge
  $\Yijt\!\to\!\Dijt$, encoding Assumption~\ref{ass:mar}: once $\Sijt$ is
  observed, the censoring probability depends on $S$ but not directly on
  the unobserved $Y$. The secular time trend $t$ acts as a common cause of
  all endogenous variables and must be adjusted for in estimation.}
\label{fig:dag}
\end{figure}

\begin{remark}[Intra-cluster correlation]
\label{rem:icc}
The shared $b_j$ induces ICC $=\sigma^2_b/(\sigma^2_b+\sigma^2_\varepsilon)$;
inference must use the cluster-robust sandwich of Section~\ref{sec:semiparam}
and Super Learner cross-validation at the cluster level (Supplement, Section~S3).
\end{remark}

\subsection{Potential Outcomes and the Causal Target}
\label{sec:scm:po}

\begin{markupbox}
For $a\in\{0,1\}$, the causal ATE on the overlap support is
\begin{equation}
  \label{eq:psi_def}
  \Psi(P_0) \;=\; \mathbb{E}\bigl[Y_{ijt}(1)-Y_{ijt}(0) \,\big|\, t \in \Tov\bigr],
\end{equation}
where the expectation is over the marginal distribution of
$(\Wijt,b_j,t)$ restricted to $t \in \Tov$ (Assumption~\ref{ass:pos_trt}).
Throughout this paper, $\Psi$ refers to this overlap-support estimand
$\Psi^{\Tov}$; full-horizon estimands that include the boundary periods
$t \in \{1,T\}$ require an additional secular-trend smoothness
assumption and are discussed separately in
Section~\ref{sec:discussion}.
\end{markupbox}

\subsection{Non-Parametric Identification}
\label{sec:id}

\begin{assumption}[Consistency]
\label{ass:consistency}
For $\Ajt=a$: $\Sijt=\Sijt(a)$ and $\Yijt=\Yijt(a)$.
\end{assumption}

\begin{assumption}[Sequential Randomization]
\label{ass:seqrand}
$(Y(0),Y(1),S(0),S(1))\perp\Ajt\mid\Wijt,t$.
This holds by design since $\tau_j$ is randomized independently of outcomes.
\end{assumption}

\begin{assumption}[Surrogate-Mediated MAR]
\label{ass:mar}
$P(\Dijt=1\mid\Yijt,\Sijt,\Ajt,\Wijt,t)=P(\Dijt=1\mid\Sijt,\Ajt,\Wijt,t)$
\citep{rubin1976inference}.
Assumption~\ref{ass:mar} is most defensible when censoring is calendar-driven
rather than health-driven; strong surrogates satisfy causal proximity
($S$ on the path $A\to S\to Y$), temporal availability, and biological
stationarity of $\mathbb{E}[Y\mid S,A,W,t]$.
\end{assumption}

\begin{assumption}[Support Positivity]
\label{ass:pos_obs}
$P(\Dijt=1\mid\Sijt=s,\Ajt=a,\Wijt=w,t)\ge c>0$ on the relevant support.
Unlike IPCW, the surrogate-bridge target functional does not contain
$g_\Delta^{-1}$; this is a support condition for identifying
$\mathbb{E}[Y\mid S,A,W,t]$ from $\{\Delta=1\}$, not an
inverse-weighting regularity condition.
\end{assumption}

\begin{markupbox}
\begin{assumption}[Treatment Overlap on the Target Support]
\label{ass:pos_trt}
Define the \emph{treatment-overlap support}
\begin{equation}
  \label{eq:Tov}
  \Tov \;=\; \{t : 0 < g_A(1\mid t) < 1\}.
\end{equation}
There exists a constant $\eta > 0$ such that, for all
$t \in \Tov$,
\begin{equation}
  \label{eq:overlap_bound}
  \eta \;\le\; g_A(1\mid t) \;\le\; 1 - \eta.
\end{equation}
Under uniform crossover randomization in a complete stepped-wedge rollout,
$g_A(1\mid t)=(t-1)/(T-1)$ is known by construction, and
$\Tov = \{2,\ldots,T-1\}$: the boundary periods $t=1$ (no cluster yet
treated) and $t=T$ (no cluster still untreated) are excluded from
$\Tov$ because pointwise treatment positivity fails by design.
Throughout this paper, the target estimand $\Psi$ refers to the
overlap-support ATE $\Psi^{\Tov}$ defined in
Theorem~\ref{thm:id} below; full-horizon estimands that include
$t \in \{1,T\}$ require an additional secular-trend smoothness
assumption (Section~\ref{sec:discussion}).
\end{assumption}
\end{markupbox}

\begin{markupbox}
\begin{theorem}[Identification on the Overlap Support via the Surrogate Bridge]
\label{thm:id}
Under Assumptions~\ref{ass:consistency}--\ref{ass:pos_trt}, the
overlap-support causal ATE
$\Psi(P_0) := \mathbb{E}\!\left[Y(1) - Y(0) \mid t \in \Tov\right]$
is identified by the longitudinal G-computation formula:
\begin{align}
  \label{eq:identification}
  \Psitrue
  \;=\;
  \frac{1}{P(t \in \Tov)}\,\EE_{W,t}\Bigl(\indov\,\Bigl[
  &\EE_{S\mid A=1,W,t}\bigl[\EE[Y\mid S,A=1,W,t,\Delta=1]\bigr]
  \notag\\
  &-\;\EE_{S\mid A=0,W,t}\bigl[\EE[Y\mid S,A=0,W,t,\Delta=1]\bigr]
  \Bigr]\Bigr).
\end{align}
\end{theorem}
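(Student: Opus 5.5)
The plan is a standard pointwise G-computation argument. I fix $(w,t)$ with $t\in\Tov$ and $a\in\{0,1\}$, identify $\EE[Y(a)\mid W=w,t]$ as the inner nested integral, and then average over the marginal law of $(W,t)$ restricted to $\Tov$. The outer step is immediate once the pointwise identity holds. By the definition of conditional expectation,
\[
\EE[Y(a)\mid t\in\Tov]=\frac{\EE_{W,t}\bigl(\indov\,\EE[Y(a)\mid W,t]\bigr)}{P(t\in\Tov)},
\]
and linearity in $a$ then gives \eqref{eq:identification}. The cluster effect $b_j$ is not conditioned on. It is integrated out inside every conditional expectation, so all quantities are with respect to the observed-data law $P_0$ marginal over $b_j$.

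For the pointwise step I proceed in three moves.

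First, Assumption~\ref{ass:seqrand} together with the overlap bound \eqref{eq:overlap_bound} of Assumption~\ref{ass:pos_trt} guarantees that $\{A=a,W=w,t\}$ has positive probability (or density) for $t\in\Tov$. This allows me to write $\EE[Y(a)\mid W,t]=\EE[Y(a)\mid A=a,W,t]$. Consistency (Assumption~\ref{ass:consistency}) then turns this into $\EE[Y\mid A=a,W,t]$.

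Second, I apply the tower property over $S$:
\[
\EE[Y\mid A=a,W,t]=\EE_{S\mid A=a,W,t}\bigl[\EE[Y\mid S,A=a,W,t]\bigr].
\]
Here the outer law is the observed conditional law of $S$. Since $S$ is observed for all units, this law is identified without reference to $\Delta$.

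Third, I show that $\EE[Y\mid S,A=a,W,t]=\EE[Y\mid S,A=a,W,t,\Delta=1]$. Assumption~\ref{ass:mar} states that $P(\Delta=1\mid Y,S,A,W,t)$ does not depend on $Y$, which is equivalent to $Y\perp\Delta\mid(S,A,W,t)$. Bayes' rule gives
\[
dP(y\mid s,a,w,t,\Delta=1)=\frac{P(\Delta=1\mid y,s,a,w,t)}{P(\Delta=1\mid s,a,w,t)}\,dP(y\mid s,a,w,t)=dP(y\mid s,a,w,t).
\]
The denominator is bounded below by $c>0$ by Assumption~\ref{ass:pos_obs}. The right-hand side is a functional of the observed data, because $Y$ is observed whenever $\Delta=1$.

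The main obstacle is making the third step rigorous on the full support that the outer integral charges. Assumption~\ref{ass:pos_obs} holds only "on the relevant support", so I would state explicitly that this support contains the support of $S\mid A=a,W,t$ for $t\in\Tov$. That containment ensures the complete-case regression is well defined $P_{S\mid A=a,W,t}$-almost everywhere, and I would handle null sets by working with versions of the conditional expectations.

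A secondary subtlety is that Assumption~\ref{ass:seqrand} is stated with $(W,t)$ but not $b_j$. Because $\tau_j$ is independent of $b_j$ in the SCM \eqref{eq:scm_A}, marginalizing over $b_j$ preserves the independence. I would note this so that the conditional laws above are unambiguous.

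Finally, restricting to $\Tov$ is exactly what makes the first step valid. At $t\in\{1,T\}$ one arm has zero probability, which explains why those periods are excluded from the estimand.
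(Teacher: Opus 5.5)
Your proof is correct and follows essentially the paper's argument. Randomization plus consistency take you to the $A=a$ arm; surrogate-mediated MAR plus support positivity replace $\EE[Y\mid S,A=a,W,t]$ by the complete-case regression; treatment overlap on $\Tov$ makes the arm-specific conditionals well defined; and integrating over $S\mid A=a,W,t$, then averaging over $(W,t)$ restricted to $\Tov$, gives the formula. The only difference is the order of steps. The paper conditions on the counterfactual surrogate $S(a)$ and uses the joint independence $(Y(a),S(a))\perp A\mid W,t$, whereas you apply randomization to $Y(a)$ alone and obtain the nested form as an observed-data tower identity, which needs only $Y(a)\perp A\mid W,t$.
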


\begin{proof}
By Assumptions~\ref{ass:consistency}--\ref{ass:seqrand},
$\mathbb{E}[Y(a)\mid S(a),W,t]=\mathbb{E}[Y\mid S,A=a,W,t]$.
By Assumption~\ref{ass:mar} and~\ref{ass:pos_obs}, conditioning on $\{\Delta=1\}$
is valid: $\mathbb{E}[Y\mid S,A=a,W,t]=\mathbb{E}[Y\mid S,A=a,W,t,\Delta=1]\equiv\QY(S,a,W,t)$.
By Assumption~\ref{ass:pos_trt}, both treatment levels have positive probability
for each $t \in \Tov$, so conditional expectations under $\mathrm{do}(A=a)$
are nonparametrically identified at each $t \in \Tov$.
Integrating over $P(S\mid a,W,t)$ and marginalizing over $(W,t)$ restricted
to $\Tov$ yields~\eqref{eq:identification}.\qed
\end{proof}
\end{markupbox}


\section{Semiparametric Theory}
\label{sec:semiparam}

We now study the surrogate-bridge estimand as a semiparametric functional.
This perspective is central for two reasons. First, it reveals the
efficient-score geometry induced by surrogate-mediated censoring and clustered
dependence: the censoring mechanism contributes no separate tangent-space
component, and the cluster structure requires summation rather than averaging
of influence contributions. Second, it clarifies why a density-plug-in
debiased estimator is not sufficient for the nested bridge problem: the
second-order remainder contains a cross-product term involving the conditional
surrogate law that is not removed by ordinary cross-fitting. The resulting
analysis motivates the specific augmented estimating-equation construction
developed in Section~\ref{sec:tmle}.

\subsection{EIC Decomposition}
\label{sec:eic:decomp}
\phantomsection\label{prop:eic_components}

The individual-level EIC components, whose derivation is provided in
Supplement, Section~S2, are as follows. For treatment $a\in\{0,1\}$:
\begin{align}
  D_{Y,a}(\Oijt)
    &= \frac{I(A=a)\,\Delta}{\gA(a\mid t)\,\gD(1\mid S,A,W,t)}
       \bigl(Y - \QY(S,\,a,\,W,t)\bigr),
    \label{eq:DYa}\\[6pt]
  D_{S,a}(\Oijt)
    &= \frac{I(A=a)}{\gA(a\mid t)}
       \bigl(\QY(S,\,a,\,W,t) \;-\; \Qint(a,\,W,t)\bigr),
    \label{eq:DSa}\\[6pt]
  D_{W,a}(\Oijt)
    &= \Qint(a,W,t) - \Psiahat,
    \label{eq:DWa}
\end{align}
where $\Psiahat = N^{-1}\sum_{j,i}\hat{\Qint}(a,\Wijt,t_{ij})$.

In Equation~\eqref{eq:DSa}, both nuisance functions are evaluated at the
fixed intervention value $a$ rather than the observed random variable $A$.
Under the indicator $I(A=a)$, the two are numerically equal on the support of
the data, but the fixed-argument form is essential: it makes explicit that
$\QY(S,a,W,t)$ and $\Qint(a,W,t)$ are counterfactual predictions under
$\mathrm{do}(A=a)$, consistent with the identification
formula~\eqref{eq:identification}. Writing $A$ in the argument of $D_{S,a}$
would obscure this causal interpretation and introduce an inconsistency with
the identification strategy.

The total individual EIC is:
\begin{equation}
  \label{eq:Dstar}
  \Dstar(\Oijt)
  \;=\;
  \bigl(D_{Y,1}+D_{S,1}+D_{W,1}\bigr)
  \;-\;
  \bigl(D_{Y,0}+D_{S,0}+D_{W,0}\bigr).
\end{equation}

The first structural result shows that surrogate-mediated censoring changes
the identification strategy but not the efficient-score geometry in the usual
missingness direction: once the surrogate is conditioned upon, the censoring
mechanism contributes no separate tangent-space component.

\begin{lemma}[Vanishing $\calT_\Delta$ Component]
\label{lem:tdelta_main}
Under Assumption~\ref{ass:mar} (Surrogate-Mediated MAR), the pathwise
derivative of $\Psi_a$ in the $\calT_\Delta$ direction is identically zero:
the EIC has no censoring-mechanism component.
\end{lemma}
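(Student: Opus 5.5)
The plan is to factor the observed-data likelihood along the causal ordering of \eqref{eq:scm_W}--\eqref{eq:scm_Y} and show that $\Psi_a$, as a functional of $P$, does not depend on the censoring factor. I would write
\[
p(O)=p(W,t)\,g_A(A\mid t)\,p(S\mid A,W,t)\,g_\Delta(\Delta\mid S,A,W,t)\,\bigl[p(Y\mid S,A,W,t,\Delta=1)\bigr]^{\Delta}.
\]
The tangent space of the nonparametric model then splits into orthogonal pieces, one per factor. The censoring piece is
\[
\calT_\Delta=\bigl\{h(\Delta,S,A,W,t):\ \EE[h\mid S,A,W,t]=0\bigr\}.
\]
Consider a regular one-dimensional submodel $P_\epsilon$ that perturbs only $g_\Delta$, with score $h\in\calT_\Delta$, and leaves $p(W,t)$, $g_A$, $p(S\mid A,W,t)$ and $p(Y\mid S,A,W,t,\Delta=1)$ fixed. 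I would show that $\epsilon\mapsto\Psi_a(P_\epsilon)$ is constant. By Theorem~\ref{thm:id}, $\Psi_a(P)$ is built only from three ingredients: the law of $(W,t)$ restricted to $\Tov$, the conditional law of $S$ given $(A=a,W,t)$, and $\QY(s,a,w,t)=\EE_P[Y\mid S=s,A=a,W=w,t,\Delta=1]$. None of these moves along the submodel, so $\frac{d}{d\epsilon}\Psi_a(P_\epsilon)\big|_{\epsilon=0}=0$ for every $h\in\calT_\Delta$.

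The one point needing care is that $\QY$ must be well defined along the whole submodel, which requires $g_{\Delta,\epsilon}(1\mid S,A,W,t)>0$. This follows from Assumption~\ref{ass:pos_obs} for small $\epsilon$, since a bounded score keeps $g_{\Delta,\epsilon}\ge c/2$. Assumption~\ref{ass:mar} is what makes this the correct parametrization. Under it, the full-data regression $\EE[Y\mid S,A,W,t]$ equals the complete-case regression, so the full-data law of $Y$ given $(S,A,W,t)$ is variation-independent of $g_\Delta$. A perturbation of $g_\Delta$ therefore represents a change in censoring alone, not a change in the outcome law.

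Next I would translate this into a statement about the efficient influence curve, which I take to be $\Dstar$ from \eqref{eq:Dstar}. Pathwise differentiability gives $\frac{d}{d\epsilon}\Psi_a(P_\epsilon)=\EE[D_a h]$ with $D_a=D_{Y,a}+D_{S,a}+D_{W,a}$. The derivative is zero, so $D_a\perp\calT_\Delta$. As a direct check, I would compute $\Pi(D_a\mid\calT_\Delta)=\EE[D_a\mid\Delta,S,A,W,t]-\EE[D_a\mid S,A,W,t]$ term by term:
\begin{itemize}
\item $D_{S,a}$ and $D_{W,a}$ are functions of $(S,A,W,t)$ only, so their projection is zero.
\item For $D_{Y,a}$, the conditional mean given $(\Delta,S,A,W,t)$ is $\Delta\, I(A=a)\,\{g_A g_\Delta\}^{-1}\bigl(\EE[Y\mid S,a,W,t,\Delta=1]-\QY\bigr)=0$ by the definition of $\QY$, and hence its mean given $(S,A,W,t)$ is also zero.
\end{itemize}
Every component of $D_a$ therefore has zero projection onto $\calT_\Delta$.

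I expect the main obstacle to be interpretive rather than computational. $D_{Y,a}$ carries the weight $g_\Delta^{-1}$, so a reader might suspect it contributes a censoring component. I would address this explicitly. The weight arises from the $\calT_Y$ direction: it comes from differentiating $\QY$ when the complete-case outcome law is perturbed. The projection calculation above shows that this weight gives no component in $\calT_\Delta$. Consequently, $\gD$ enters $\Dstar$ only as a nuisance weight and never as a term to be estimated in its own right.
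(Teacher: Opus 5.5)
Your proposal is correct and takes essentially the same route as the paper: perturb only $g_\Delta$ along a score $h_\Delta\in\mathcal{T}_\Delta$, observe that none of the ingredients of the identifying functional (the law of $(W,t)$, $P(S\mid a,W,t)$, and $\bar Q_Y$) moves, and conclude that the pathwise derivative vanishes. Your term-by-term projection check, in particular that $\mathbb{E}[D_{Y,a}\mid \Delta,S,A,W,t]=0$ despite the $g_\Delta^{-1}$ weight, is a useful addition that the paper does not spell out; its proof only writes the derivative as $\mathbb{E}_P[(\bar Q_{\mathrm{int}}(a,W,t)-\Psi_a)\,h_\Delta(O)]$.
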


\begin{proof}
Consider a parametric submodel perturbing only $g_\Delta$ along score
$h_\Delta\in\calT_\Delta$.  The identification formula~(Equation~\eqref{eq:identification})
reaches $g_\Delta$ only through the restriction of $\EE[Y\mid S,A{=}a,W,t]$ to
$\{\Delta=1\}$.  By Assumption~\ref{ass:mar},
$P(\Delta=1\mid Y,S,A,W,t) = P(\Delta=1\mid S,A,W,t)$,
so perturbing $g_\Delta$ while holding $\QY(S,a,W,t)$ and $P(S\mid a,W,t)$
fixed leaves $\Qint(a,W,t)$ unchanged.  Consequently,
\begin{equation}
  \frac{d}{d\epsilon}\Psi_a(P_\epsilon)\big|_{\epsilon=0}
  = \EE_P\!\bigl[\bigl(\Qint(a,W,t)-\Psi_a\bigr)\,h_\Delta(O)\bigr]
  = 0,
\end{equation}
where the final equality holds because $\Qint(a,W,t)-\Psi_a$ is $P$-mean-zero
and orthogonal to any score supported only on variations in the censoring
mechanism.  Without Assumption~\ref{ass:mar}, the censoring direction
contributes a non-zero term involving the unidentified quantity
$\EE[Y\mid S,A,W,t,\Delta=0]$, rendering the EIC non-estimable from observed
data.  \qed
\end{proof}

The second structural result concerns the correct level of influence
aggregation. Because the estimand is defined through cluster-randomized data
with arbitrary within-cluster dependence, the efficient contribution of a
cluster is obtained by summation rather than averaging of individual terms;
this is essential for valid semiparametric variance estimation under unequal
cluster sizes.

\begin{lemma}[Cluster-Level Summation Rule]
\label{lem:summation}
Let $O_j = \{O_{ijt}: i=1,\ldots,n_j,\;t=1,\ldots,T\}$ denote
the full data vector for cluster $j$, and let $\Psi$ be expressed
as a functional of the cluster-level distribution $P_j$.
The Riesz representer of the pathwise derivative in $L^2(P_j)$
--- equivalently, the cluster-level EIC --- is the \emph{sum}
(not average) of individual EICs:
\begin{equation}
  \label{eq:EICj_lemma}
  \EICj = \sum_{i=1}^{n_j}\sum_{t=1}^{T} D^*(\Oijt).
\end{equation}
Averaging by $n_j T$ would rescale the influence curve,
rendering the sandwich variance estimator inconsistent under
unbalanced cluster sizes and non-negligible ICC.
\end{lemma}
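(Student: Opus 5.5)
The plan is to treat the cluster $O_j$ as the basic i.i.d.\ unit: clusters are independent because $b_j$ and $\tau_j$ are i.i.d.\ across $j$. I would then compute the pathwise derivative of $\Psi$ along a regular parametric submodel $\{P_{j,\epsilon}\}$ of the cluster-level law with score $h_j(O_j)\in L^2_0(P_j)$. The key structural fact is that $\Psi$ depends on $P_j$ only through the individual-level marginals $P_{ijt}$ of $O_{ijt}$. Indeed, the identification formula~\eqref{eq:identification} involves only $\QY$, the conditional law of $S$ given $(A,W,t)$, and the marginal law of $(W,t)$. All of these are features of the one-observation marginal, averaged over $(i,t)$ with the design weights. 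Within-cluster dependence therefore affects the variance of the estimator but not the functional itself.

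\textbf{Step 1 (reduction to marginals).} Write $\Psi(P_j)=\Phi\bigl(\bar P\bigr)$, where $\bar P=(n_jT)^{-1}\sum_{i,t}P_{ijt}$ is the pooled individual marginal. Exchangeability of individuals within the design lets one treat each $(i,t)$ slot as having the same marginal up to $t$. A submodel of $P_j$ with score $h_j$ induces, on each marginal $P_{ijt}$, the score $h_{ijt}(o)=\EE[h_j(O_j)\mid O_{ijt}=o]$.

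\textbf{Step 2 (chain rule).} From Section~\ref{sec:eic:decomp} together with Lemma~\ref{lem:tdelta_main}, the individual-level functional has gradient $D^*$. This gives
\[
\frac{d}{d\epsilon}\Psi(P_{j,\epsilon})\Big|_{\epsilon=0}
=\kappa\sum_{i,t}\EE\bigl[D^*(O_{ijt})\,h_{ijt}(O_{ijt})\bigr]
=\kappa\,\EE\Bigl[\Bigl(\sum_{i,t}D^*(O_{ijt})\Bigr)h_j(O_j)\Bigr],
\]
where the second equality is the tower property. The constant $\kappa$ is the normalization linking a per-cluster contribution to the target. Choosing the parametrization in which $\Psi$ is the per-cluster total (equivalently, absorbing $\kappa$ into the $\sqrt J$ scaling) gives $\kappa=1$. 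Since $\sum_{i,t}D^*(O_{ijt})$ is mean zero and lies in $L^2(P_j)$, and the model for $P_j$ is nonparametric so the tangent space is all of $L^2_0(P_j)$, this sum is the unique Riesz representer. That establishes~\eqref{eq:EICj_lemma}.

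\textbf{Step 3 (averaging is wrong).} Take instead $(n_jT)^{-1}\sum_{i,t}D^*$. Its inner product with $h_j$ is rescaled by a cluster-dependent factor $1/(n_jT)$, so for unequal $n_j$ it cannot represent the same derivative. Concretely, the variance $J^{-1}\sum_j\mathrm{Var}(\EICj)$ expands as $\sum_{i,t}\mathrm{Var}(D^*)$ plus cross terms of order $n_j^2\,\mathrm{ICC}$. Averaging weights these terms incorrectly whenever the $n_j$ differ and the ICC is positive. A two-cluster-size example with $b_j$ shared would exhibit the inconsistency explicitly.

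\textbf{Main obstacle.} The hard part is Step 1: making precise that $\Psi$ is a functional of the pooled marginal, and fixing the normalization $\kappa$ when $n_j$ is random and the weights in~\eqref{eq:identification} are individual-level. I would first condition on cluster sizes, treating $n_j$ as an ancillary design quantity. I would then define $\Psi$ as the ratio of expected cluster totals, $\EE[\sum_{i,t}\cdots]/\EE[n_jT]$, and check that the delta method contributes the factor $1/\EE[n_jT]$ only as a common constant. That common constant is what ``sum, not average'' means up to global scaling.
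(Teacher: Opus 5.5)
Your proposal is correct and follows essentially the same route as the paper. The paper also treats the cluster as the i.i.d.\ unit, perturbs $P_j$ along a cluster-level score $h_j$, and uses the additive structure of $\Psi$ over the individual slots to obtain $\EE\bigl[\bigl(\sum_{i,t}D^*(\Oijt)\bigr)h_j(O_j)\bigr]$; your tower-property step is exactly its ``chain rule over the additive structure.'' It then reads off the sum as the Riesz representer in the nonparametric cluster-level model, as you do. Your explicit constant $\kappa$, and your conclusion that the correct scaling is the common factor $1/\EE[n_jT]$ rather than a cluster-specific $1/(n_jT)$, make precise a normalization the paper's own derivation glosses over: it writes $\Psi$ with a $1/N$ prefactor and then silently drops that factor in the derivative.
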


\begin{proof}
\label{app:eic:cluster}

Collecting Proposition~\ref{prop:eic_components} for $a=1$ minus $a=0$:
\begin{equation}
  \label{eq:Dstar_app}
  D^*(\Oijt)
  \;=\;
  \bigl(D_{Y,1}+D_{S,1}+D_{W,1}\bigr)
  -
  \bigl(D_{Y,0}+D_{S,0}+D_{W,0}\bigr).
\end{equation}
Because clusters are the i.i.d.\ units of randomization, the
parameter $\Psi$ is a functional of the cluster-level distribution
$P_j$ of $O_j = \{O_{ijt}: i=1,\ldots,n_j,\,t=1,\ldots,T\}$:
\begin{equation}
  \Psi
  = \frac{1}{N}\sum_{j=1}^J\sum_{i=1}^{n_j}
    \bigl[\Qint(1,W_{ij},t_{ij}) - \Qint(0,W_{ij},t_{ij})\bigr].
\end{equation}
Applying the pathwise derivative to a submodel $P_{j,\epsilon}$ perturbing
only cluster $j$ along score $h_j(O_j)$, the chain rule over the additive
structure gives
\begin{equation}
  \frac{d}{d\epsilon}\Psi(P_{j,\epsilon})\big|_{\epsilon=0}
  = \EE_{P_j}\!\!\left[
    \left(\sum_{i=1}^{n_j}\sum_{t=1}^T D^*(\Oijt)\right)h_j(O_j)
  \right].
\end{equation}
The Riesz representer in $L^2(P_j)$ is therefore the \emph{sum} (not average)
of individual EICs:
\begin{equation}
  \label{eq:EICj_riesz}
  \EICj
  \;=\;
  \sum_{i=1}^{n_j}\sum_{t=1}^{T} D^*(\Oijt).
\end{equation}
Summation absorbs both the intra-cluster correlation (ICC) across individuals at
a given step and the serial correlation across steps induced by the secular trend;
averaging by $n_j$ would rescale the influence curve and produce inconsistent
variance estimates under unbalanced cluster sizes.
\end{proof}

\subsection{The Nested Cross-Product Remainder and the Necessity of an Augmented Estimating Equation}
\label{sec:rsy}

The previous two lemmas establish the efficient-score structure of the
surrogate-bridge functional --- the core semiparametric object of this paper ---
but they do not yet determine the appropriate estimator. A natural question
is whether a density-plug-in one-step debiased machine-learning construction
suffices. The next result shows that, for nested bridge functionals, the
answer is generally no. Even after cross-fitting removes
first-order empirical-process terms, a second-order cross-product remainder
$R_{SY}$ remains because the bridge functional depends jointly on the outcome
regression and the conditional surrogate law $f_S$. This remainder is not a
fluctuation term and therefore is not eliminated by ordinary sample splitting.
This is the semiparametric obstruction that the augmented estimating-equation
construction of Section~\ref{sec:tmle} is designed to avoid by integrating
over the empirical surrogate distribution rather than estimating $f_S$
directly.

\begin{proposition}[Nested Cross-Product Term in the DML Remainder]
\label{prop:dml_remainder}
Let $f_S(\cdot\mid a,W,t) \equiv P_0(S\in\cdot\mid A=a,W,t)$ denote the
conditional density of the surrogate, and let $\hat f_S$ be any estimator
of $f_S$ independent of the nuisance estimators used in the proposed SA-AIPW.
Define the \emph{plug-in integral estimator}
$\hat{\bar Q}^{\,\mathrm{DML}}_{\mathrm{int}}(a,W,t)
\equiv \int \hat\QY(s,a,W,t)\,\hat f_S(s\mid a,W,t)\,d\mu(s)$,
which constructs $\Qint$ as an integral of $\hat\QY$ against $\hat f_S$.

For the DML one-step estimator
$\hat\Psi^{\mathrm{DML}} =
 \hat\Psi^{\mathrm{DML}}_{\mathrm{plug\text{-}in}}
 + N^{-1}\sum_{j,i,t} D^*(O_{ijt};\hat\QY,\hat{\bar Q}^{\,\mathrm{DML}}_{\mathrm{int}},\hat\gA,\hat\gD)$
based on this construction, the von Mises expansion yields:
\begin{equation}
  \label{eq:dml_remainder}
  \hat\Psi^{\mathrm{DML}} - \Psi(P_0)
  \;=\;
  \frac{1}{J}\sum_{j=1}^J \EICj
  \;+\; R_2^{\mathrm{TMLE}}(\hat P, P_0)
  \;+\; R_{SY}(\hat P, P_0)
  \;+\; o_P(J^{-1/2}),
\end{equation}
where $R_2^{\mathrm{TMLE}}$ is the TMLE remainder (Supplement, Section~S2.4) and the
\emph{nested cross-product remainder} is
\begin{equation}
  \label{eq:RSY}
  R_{SY}(\hat P, P_0)
  \;=\;
  \sum_{a\in\{0,1\}}(-1)^{1-a}
  \iint
  \bigl(\hat\QY - \bar Q^0_Y\bigr)(s,a,w,t)\,
  \bigl(\hat f_S - f^0_S\bigr)(s\mid a,w,t)\;
  d\mu(s)\,dP_0(w,t).
\end{equation}
For $R_{SY} = o_P(J^{-1/2})$ the DML estimator requires the additional
product-rate condition
\begin{equation}
  \label{eq:dml_extra}
  \sum_{a\in\{0,1\}}
  \bigl\|\hat\QY(\cdot,a,\cdot,\cdot) - \bar Q^0_Y(\cdot,a,\cdot,\cdot)\bigr\|_{P_0}
  \cdot
  \bigl\|\hat f_S(\cdot\mid a,\cdot,\cdot) - f^0_S(\cdot\mid a,\cdot,\cdot)\bigr\|_{P_0}
  \;=\;
  o_P(J^{-1/2}),
\end{equation}
which involves estimating the conditional density $f_S$ at rate
$o_P(J^{-1/4})$ in $L^2(P_0)$.

The SA-AIPW estimator avoids $R_{SY}$ without estimating $f_S$ at all:
by integrating over the empirical surrogate distribution through treatment
weighting (Section~\ref{sec:tmle}) rather than plugging in an estimate
$\hat f_S$, the augmented estimating equation does not generate the
density-product term~\eqref{eq:RSY} in its von~Mises expansion.
Equivalently, the construction places
$N^{-1}\sum_{j,i,t} D_{S,a}(O_{ijt}) = 0$ at zero by definition rather
than by iterative fluctuation, so the cross-product factor that
constitutes $R_{SY}$ never enters.
The proof is given in Supplement, Section~S2.6.
\end{proposition}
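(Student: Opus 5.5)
The plan is to carry out a direct von Mises (first-order Taylor) expansion of the one-step estimator around $P_0$, one treatment arm $a$ at a time, and then take the difference over $a\in\{0,1\}$. Write $\hat\Psi_a^{\mathrm{DML}} = P_N \hat{\bar Q}^{\,\mathrm{DML}}_{\mathrm{int}}(a,\cdot) + P_N[\hat D_{Y,a}+\hat D_{S,a}]$. Here $P_N$ is the pooled empirical measure over $(i,j,t)$, and $\hat D_{\cdot,a}$ denote the components \eqref{eq:DYa}--\eqref{eq:DSa} evaluated at $(\hat\QY,\hat{\bar Q}^{\,\mathrm{DML}}_{\mathrm{int}},\hat\gA,\hat\gD)$. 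The $D_{W,a}$ term cancels the centring, as usual. I would use the standard decomposition
\[
\hat\Psi_a^{\mathrm{DML}}-\Psi_a(P_0) = (P_N-P_0)D^*_a(P_0) + (P_N-P_0)\{D^*_a(\hat P)-D^*_a(P_0)\} + \bigl[\Psi_a(\hat P)-\Psi_a(P_0)+P_0 D^*_a(\hat P)\bigr].
\]
The first term is regrouped by clusters. By Lemma~\ref{lem:summation}, it equals $J^{-1}\sum_j \EICj$ up to the normalisation $N/J$, which I would absorb by treating cluster sizes as i.i.d. with finite mean. The second, empirical-process term is $o_P(J^{-1/2})$ by cluster-level cross-fitting: conditionally on the training folds, it is a mean-zero average over independent clusters whose variance vanishes under $L^2$ consistency of the nuisances. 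The estimator $\hat f_S$ is assumed independent of the evaluation fold, so it can be treated the same way. By Lemma~\ref{lem:tdelta_main}, no censoring-direction term appears.

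The substantive step is computing the exact remainder $R_a := \Psi_a(\hat P)-\Psi_a(P_0)+P_0 D^*_a(\hat P)$. First compute $P_0\hat D_{Y,a}$ by iterated expectations: condition on $(S,A,W,t)$, then use Assumption~\ref{ass:mar} to replace $\EE[Y\mid S,a,W,t,\Delta=1]$ by $\bar Q^0_Y$. This gives
\[
P_0\hat D_{Y,a}=\iint \frac{g^0_A g^0_\Delta}{\hat g_A\hat g_\Delta}\,(\bar Q^0_Y-\hat\QY)\,f^0_S\,d\mu(s)\,dP_0(w,t).
\]
Next, $P_0\hat D_{S,a}=\iint \frac{g^0_A}{\hat g_A}\bigl(\hat\QY-\hat{\bar Q}^{\,\mathrm{DML}}_{\mathrm{int}}\bigr)f^0_S$, and $\Psi_a(\hat P)-\Psi_a(P_0)=\iint(\hat\QY\hat f_S-\bar Q^0_Y f^0_S)$. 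Since $g_A$ is known by design, set $\hat g_A=g^0_A$. Adding the three pieces and writing $\hat g_\Delta^{-1}g^0_\Delta = 1+(g^0_\Delta-\hat g_\Delta)/\hat g_\Delta$, the $\hat\QY f^0_S$ and $\bar Q^0_Y f^0_S$ terms cancel to first order. This leaves
\[
R_a=\iint \frac{(\hat g_\Delta-g^0_\Delta)(\hat\QY-\bar Q^0_Y)}{\hat g_\Delta}f^0_S \;+\; \Bigl[\iint \hat\QY(\hat f_S-f^0_S) - \int\bigl(\hat{\bar Q}^{\,\mathrm{DML}}_{\mathrm{int}}-\bar Q^0_{\mathrm{int}}\bigr)dP_0\Bigr] + \cdots.
\]
The bracketed term is the key one. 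Using $\hat{\bar Q}^{\,\mathrm{DML}}_{\mathrm{int}}=\int\hat\QY\hat f_S$, rewrite it by adding and subtracting $\bar Q^0_Y(\hat f_S-f^0_S)$. The part linear in $\hat f_S-f^0_S$ alone is exactly what $D_{S,a}$ is designed to absorb, so the only survivor is $\iint(\hat\QY-\bar Q^0_Y)(\hat f_S-f^0_S)\,d\mu\,dP_0$. The $g_\Delta$--$\QY$ product is identified as $R_2^{\mathrm{TMLE}}$ from Supplement S2.4. Taking $\sum_a(-1)^{1-a}$ yields \eqref{eq:RSY}.

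I expect the main obstacle to be this bookkeeping: checking that the $D_{S,a}$ correction cancels the first-order $\hat f_S$ term exactly, and not only up to a term involving $\hat{\bar Q}^{\,\mathrm{DML}}_{\mathrm{int}}$ evaluated at $\hat f_S$. I would try to verify this first by writing everything as integrals against $d\mu(s)\,dP_0(w,t)$ before simplifying. The rate statement \eqref{eq:dml_extra} then follows by Cauchy--Schwarz applied to \eqref{eq:RSY}. The change of measure from $d\mu(s)$ to the $L^2(P_0)$ norm needs $f^0_S$ bounded above and below on the support, which I would state as a regularity condition. To argue that $R_{SY}$ is genuinely not removed, note three things. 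It is a deterministic functional of the training-fold nuisances, so cross-fitting, which only controls $(P_N-P_0)$ terms, leaves it unchanged. It contains no $g_\Delta$ or $g_A$ factor, so correct specification of the known $g_A$ cannot annihilate it. Nor does any other component of $D^*$ supply a complementary factor.

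For the final claim about SA-AIPW, I would repeat the same expansion with $\hat{\bar Q}_{\mathrm{int}}$ defined as the solution of $P_N D_{S,a}=0$, namely the $I(A=a)/g_A$-weighted empirical average of $\hat\QY$. Then $\int \hat\QY\,\hat f_S$ is replaced by an integral against the empirical law of $S$. The difference from $f^0_S$ is a $(P_N-P_0)$ term, which is handled by the empirical-process step, and no density-product term is generated.
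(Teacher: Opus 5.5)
The step that produces $R_{SY}$ fails. Carrying your own bookkeeping through exactly shows the term does not appear, so the statement cannot be proved this way.

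Your three displayed expectations are correct. With $\hat g_A=g^0_A$ and $\hat{\bar Q}^{\,\mathrm{DML}}_{\mathrm{int}}=\int\hat{\bar Q}_Y\hat f_S\,d\mu$, the plug-in error is $\Psi_a(\hat P)-\Psi_a(P_0)=\int(\hat{\bar Q}^{\,\mathrm{DML}}_{\mathrm{int}}-\bar Q^0_{\mathrm{int}})\,dP_0$. Meanwhile $P_0\hat D_{S,a}=\iint\hat{\bar Q}_Y f^0_S\,d\mu\,dP_0-\int\hat{\bar Q}^{\,\mathrm{DML}}_{\mathrm{int}}\,dP_0$. The $\hat{\bar Q}^{\,\mathrm{DML}}_{\mathrm{int}}$ terms cancel identically, so $\hat f_S$ drops out. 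Adding $P_0\hat D_{Y,a}$ then gives exactly
\[
R_a=\iint\frac{(\hat g_\Delta-g^0_\Delta)(\hat{\bar Q}_Y-\bar Q^0_Y)}{\hat g_\Delta}\,f^0_S\,d\mu(s)\,dP_0(w,t),
\]
with nothing hidden in your ``$\cdots$''. Your bracket is mis-derived: the true leftover is $\iint\hat{\bar Q}_Y\hat f_S\,d\mu\,dP_0-\int\hat{\bar Q}^{\,\mathrm{DML}}_{\mathrm{int}}\,dP_0=0$. Appealing to $D_{S,a}$ to ``absorb'' the part linear in $\hat f_S-f^0_S$ double counts a correction whose expectation you had already added.

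The reason is structural. $D_{S,a}$ is affine in $\Qint$ with coefficient $-I(A=a)/\gA(a\mid t)$. Under the known design propensity, its conditional mean given $(W,t)$ is exactly $-1$. So the one-step cancels the whole error of any $\hat{\bar Q}_{\mathrm{int}}$, cross-product included. The obstacle you anticipated therefore resolves against the statement: the exact remainder contains no $R_{SY}$, and \eqref{eq:dml_extra} is not needed. $\hat f_S$ only has to make $\hat{\bar Q}^{\,\mathrm{DML}}_{\mathrm{int}}$ $L^2(P_0)$-consistent, at any rate, so that the cross-fitted empirical-process term is $o_P(J^{-1/2})$.

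The paper's argument in Supplement S2.6 uses the same heuristic you did. It splits $\hat{\bar Q}^{\,\mathrm{DML}}_{\mathrm{int}}-\bar Q^0_{\mathrm{int}}$ into two linear pieces and a cross-product, then asserts the one-step corrects only the linear ones. It therefore has the same defect, and neither route can deliver a nonzero $R_{SY}$.

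Your supporting remarks inherit the problem:

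\begin{itemize}
\item If $\gA$ were estimated, the same computation adds $\int(1-g^0_A/\hat g_A)(\hat{\bar Q}^{\,\mathrm{DML}}_{\mathrm{int}}-\bar Q^0_{\mathrm{int}})\,dP_0$. So $\gA$ is exactly a doubly robust complement of the surrogate-law error, and knowing it is what annihilates that error. Your claim that $g_A$ ``cannot annihilate it'' is backwards.
\item For the SA-AIPW claim, \eqref{eq:psia_aipw} is, up to the restriction to $\Tov$, literally the same one-step estimator with $\hat{\bar Q}_{\mathrm{int}}\equiv0$ inserted in the plug-in and in $D_{S,a}$. It therefore has the identical remainder. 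The asserted contrast is empty; it is not a consequence of integrating against the empirical law of $S$.
\item The factor $N/J$ cannot be absorbed. $N^{-1}\sum_{j,i,t}D^*=(J/N)\,J^{-1}\sum_j\EICj$, and $J/N$ tends to the reciprocal of the mean number of observations per cluster, not to $1$. With $\EICj$ defined as a sum, the leading term in \eqref{eq:dml_remainder} is mis-scaled.
\end{itemize}
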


\begin{remark}[Why DML cross-fitting does not eliminate $R_{SY}$]
\label{rem:dml_cf}
DML cross-fitting eliminates the \emph{first-order} empirical process term
$(\mathbb{P}_N - P_0)D^*(\cdot;\hat\eta)$ by ensuring $\hat\eta$ and the
validation observation are trained on disjoint folds.  The nested
cross-product $R_{SY}$, however, is a \emph{second-order} term: it is the
$P_0$-expectation of the product of two estimation errors, not a fluctuation
around that expectation.  Cross-fitting has no effect on second-order products;
only product-rate conditions or an explicit targeting step can control them.
Because $R_{SY}$ has no doubly-robust complement --- there is no nuisance
parameter whose consistent estimation forces $R_{SY}=0$ without rate conditions
on $f_S$ --- condition~\eqref{eq:dml_extra} on $f_S$ arises as an additional
requirement for the DML one-step, alongside the product-rate condition
already needed for $R_2$.
A DML estimator that also satisfies~\eqref{eq:dml_extra} would be valid;
the SA-AIPW renders that condition unnecessary by avoiding the
density-product term entirely through its construction (integrating over
the empirical surrogate distribution rather than against an estimated
$\hat f_S$), so $R_{SY}$ never appears in the expansion.
\end{remark}

\section{Surrogate-Assisted AIPW Construction}
\label{sec:tmle}

Proposition~\ref{prop:dml_remainder} identifies the obstacle to a
density-plug-in one-step debiased estimator: the nested bridge functional
generates a second-order remainder $R_{SY}$ involving the conditional
surrogate law $f_S$ that cross-fitting does not eliminate. The estimator
developed here avoids that obstacle by construction. Rather than estimating
the surrogate density and integrating against it, the proposed surrogate-assisted
AIPW (SA-AIPW) estimator integrates over the empirical surrogate distribution
through treatment weighting, with administrative censoring handled by an
inverse-observation residual correction. As a structural consequence,
$R_{SY}$ does not appear in the von Mises expansion, and no rate condition
on $\hat f_S$ is required.

\subsection{Stage 1: Initial Estimation via Cross-Fitted Super Learner}
\label{sec:tmle:sl}

We seek to estimate the population functional $\Psitrue$
defined in~\eqref{eq:identification}. Two nuisance functions require
data-adaptive estimation by $K$-fold cluster-level cross-fitted
Super Learner (Supplement, Section~S3):
\begin{align}
  \QY(S,A,W,t)       &= \EE[Y\mid S,A,W,t,\Delta=1], \label{eq:QY}\\
  \gD(1\mid S,A,W,t) &= P(\Delta=1\mid S,A,W,t). \label{eq:gDelta}
\end{align}
The treatment propensity $\gA(1\mid t)=(t-1)/(T-1)$ is known by the
randomization schedule and plugged in directly; in particular,
$\gA$ does not depend on baseline covariates $W$. Cross-fitting at the
cluster level ensures that, for each observation $\Oijt$, the nuisance
estimators evaluated on it are trained on a fold not containing cluster
$j$. Throughout, write $\hat m \equiv \hat{\bar Q}_Y$ and $\hat\pi
\equiv \hat g_\Delta$ for the cross-fitted versions of the outcome
regression and observation propensity.

\subsection{Stage 2: Augmented Estimating Equation}
\label{sec:tmle:fluct}

For each treatment level $a\in\{0,1\}$, define the surrogate-assisted
pseudo-outcome
\begin{equation}
  \label{eq:tildeY}
  \widetilde Y_{ijt}^{(a)}
  \;=\;
  \hat m(S_{ijt},a,W_{ijt},t)
  \;+\;
  \frac{\Delta_{ijt}}{\hat\pi(S_{ijt},a,W_{ijt},t)}
  \bigl(Y_{ijt} - \hat m(S_{ijt},a,W_{ijt},t)\bigr).
\end{equation}
The first term reconstructs the long-term outcome from the short-term
surrogate; the second term applies an inverse-observation correction to the
residual $Y_{ijt} - \hat m$, restricted to the observed sub-cohort
$\{\Delta_{ijt}=1\}$. Because $\hat m$ is computed from the observed cohort
under Assumption~\ref{ass:mar}, the correction acts only on the residual
and not on the full outcome, so its variance does not scale with $1/g_\Delta$
in the leading order.

\begin{markupbox}
The overlap-support treatment-arm estimator restricts the empirical
average to $t \in \Tov$ and renormalizes:
\begin{equation}
  \label{eq:psia_aipw}
  \hat\Psi_a
  \;=\;
  \frac{1}{N_{\Tov}}
  \sum_{j=1}^{J}\sum_{i=1}^{n_j}\sum_{t=1}^{T}
  \indov\,
  \frac{I(A_{jt}=a)}{\gA(a\mid t)}\,
  \widetilde Y_{ijt}^{(a)},
\end{equation}
where $N_{\Tov} = \sum_{j,i,t} \indov$ counts observations on the
overlap support. The indicator restricts the average to periods on which
treatment positivity holds (Assumption~\ref{ass:pos_trt}); the
empirical sum over $(j,i,t) \in \Tov$ implicitly integrates over
the empirical surrogate distribution restricted to those periods,
without requiring an estimate of $f_S(s\mid a,W,t)$.
\end{markupbox}

\paragraph{Point estimator.}
The SA-AIPW estimator of the average treatment effect is
\begin{equation}
  \label{eq:ate_tmle}
  \PsiAIPW
  \;=\;
  \hat\Psi_1 - \hat\Psi_0.
\end{equation}

\paragraph{Score equation.}
By construction, $\PsiAIPW$ satisfies
\begin{equation}
  \label{eq:score_eq_aipw}
  \frac{1}{N}\sum_{j,i,t} D^*\!\bigl(\Oijt;\hat m,\widehat{\Qint},\gA,\hat\gD\bigr)
  \;=\; 0,
\end{equation}
with $\widehat{\Qint}(a,W,t) := N^{-1}\sum_{j,i,t} \widetilde Y_{ijt}^{(a)}$
identified as the empirical projection of $\hat m$ onto $(A,W,t)$.
This identity is the algebraic mechanism by which $R_{SY}$ never enters
the von Mises expansion.


\section{Asymptotic Theory}
\label{sec:asymptotics}

\subsection{Cluster-Level Aggregation and Inference}
\label{sec:eic:cluster}

We aggregate the individual influence curves to the cluster level by summing
over all individuals \emph{and all time steps} within cluster $j$:
\begin{equation}
  \label{eq:EICj}
  \EICj
  \;=\;
  \sum_{i=1}^{n_j}\sum_{t=1}^{T}
  \Dstar(\Oijt).
\end{equation}
The double summation over both $i$ and $t$ is essential. Because the cluster
is the independent unit of the data-generating process, $\EICj$ must absorb
\emph{all} within-cluster dependence: both the ICC across individuals at a
given time step and the serial correlation across time periods induced by the
secular trend. The sandwich variance estimator below is then valid without
any additional autocorrelation correction, provided $J$ is sufficient for the
cluster-level CLT. The formal projection argument establishing that summation
(not averaging) is the correct operation is provided in
Supplement, Section~S2.

The cluster-robust sandwich variance estimator is:
\begin{equation}
  \label{eq:var}
  \widehat{\mathrm{Var}}\!\left(\PsiAIPW\right)
  \;=\;
  \frac{1}{J}
  \cdot
  \frac{1}{J-1}
  \sum_{j=1}^{J}
  \left(\EICj - \EICbar\right)^2,
\end{equation}
and a Wald confidence interval at level $1-\alpha$ is
\begin{equation}
  \label{eq:ci}
  \mathrm{CI}^z_{1-\alpha}
  \;=\;
  \PsiAIPW
  \;\pm\;
  z_{\alpha/2}\,\sqrt{\widehat{\mathrm{Var}}\!\left(\PsiAIPW\right)},
\end{equation}
where $z_{\alpha/2}$ is the upper $\alpha/2$ quantile of $\mathcal{N}(0,1)$.
For the small-$J$ regime, Theorem~S1 (Supplement, Section~S1) establishes that replacing
$z_{\alpha/2}$ by the $t_{J-1}$ quantile strictly reduces the undercoverage
risk; we denote this interval
\begin{equation}
  \label{eq:ci_t}
  \mathrm{CI}^t_{1-\alpha}(J)
  \;=\;
  \PsiAIPW
  \;\pm\;
  t_{J-1,\,\alpha/2}\,\sqrt{\widehat{\mathrm{Var}}\!\left(\PsiAIPW\right)},
\end{equation}
where $t_{J-1,\alpha/2}$ is the upper $\alpha/2$ quantile of Student's
$t$-distribution with $J-1$ degrees of freedom.

\subsection{Asymptotic Normality}
\label{sec:eic:asymp}

We now show that the proposed estimator solves the semiparametric estimation
problem induced by the nested bridge functional: under the augmented
estimating-equation construction of Section~\ref{sec:tmle}, the estimator is
asymptotically linear with influence function equal to the cluster-level
efficient score, and the remaining second-order term obeys the usual
product-rate control. Asymptotic linearity implies that the cluster-robust
sandwich variance estimator is consistent for the leading-order asymptotic
variance.

\begin{theorem}[Asymptotic Linearity and Normality of the Surrogate-Assisted AIPW Estimator]
\label{thm:asymp}
Suppose Assumptions~\ref{ass:consistency}--\ref{ass:pos_trt} hold; in
particular, treatment positivity holds pointwise on the overlap support
$\Tov$, and the target estimand $\Psi$ is the overlap-support ATE
identified in Theorem~\ref{thm:id}. Suppose further that the following
regularity conditions are satisfied:
\begin{enumerate}[label=\normalfont(C\arabic*),leftmargin=2em]
  \item \emph{(Cluster independence)} The cluster-level data vectors
    $O_j = \{O_{ijt} : i=1,\ldots,n_j,\; t=1,\ldots,T\}$ are independent
    across $j = 1,\ldots,J$, with $n_j$ bounded above by a constant $\bar{n}
    < \infty$.
  \item \emph{(Moment condition on the cluster-level EIC)}
    There exists $\delta > 0$ such that
    $\EE\bigl[|\EICj|^{2+\delta}\bigr] < \infty$, and
    $\hat\QY,\hat\gD$ remain bounded almost surely in a neighbourhood of
    $(\bar Q_Y^0, g_\Delta^0)$.
  \item \emph{(Product-rate condition)} The nuisance estimators satisfy
    \begin{equation}
      \label{eq:rate_condition}
      \bigl\|\hat{\QY} - \bar{Q}^0_Y\bigr\|_{P_0}
      \cdot \bigl\|\hat{\gD} - g^0_\Delta\bigr\|_{P_0}
      \;=\;
      \oP(J^{-1/2}),
    \end{equation}
    where $\|\cdot\|_{P_0}$ denotes the $L^2(P_0)$ norm. The cross-product
    involving $g_A$ that would appear in a generic AIPW theorem does not
    arise here, because $g_A(\cdot\mid t)=(t-1)/(T-1)$ is known by the
    randomization schedule and plugged in directly. Condition
    \eqref{eq:rate_condition} is satisfied whenever both $\hat{\QY}$ and
    $\hat{\gD}$ converge at rate $\oP(J^{-1/4})$ in $L^2(P_0)$.
\end{enumerate}
Then the SA-AIPW estimator is asymptotically linear:
\begin{equation}
  \label{eq:asymp_linear}
  \PsiAIPW - \Psitrue
  \;=\;
  \frac{1}{J}\sum_{j=1}^{J} \EICj \;+\; \oP(J^{-1/2}),
\end{equation}
and, as $J\to\infty$:
\begin{equation}
  \label{eq:asymp_normal}
  \sqrt{J}\,\bigl(\PsiAIPW - \Psitrue\bigr)
  \;\to_d\;
  \mathcal{N}\!\bigl(0,\;\sigma^2\bigr),
  \qquad
  \sigma^2 \;=\; \mathrm{Var}(\EICj).
\end{equation}
The cluster-robust sandwich estimator~\eqref{eq:var} is consistent for
the leading-order asymptotic variance $\sigma^2 / J$.
\end{theorem}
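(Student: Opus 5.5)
The proof will decompose the estimator arm by arm. Fix $a\in\{0,1\}$ and write $\hat\Psi_a - \Psi_a$, where $\Psi_a$ is the overlap-support mean of $\Qint(a,W,t)$ from Theorem~\ref{thm:id}. Because $N_{\Tov}/J$ converges to a positive constant under (C1), the renormalization can be handled by a ratio (delta-method) argument. It therefore suffices to study the cluster sums
\[
\frac{1}{J}\sum_{j}\sum_{i,t}\indov\Bigl[\frac{I(A_{jt}=a)}{\gA(a\mid t)}\widetilde Y^{(a)}_{ijt}-\Psi_a\Bigr].
\]
I would add and subtract the same expression with the true nuisances $(\bar Q^0_Y,g^0_\Delta)$ in place of $(\hat m,\hat\pi)$. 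This splits the sum into three pieces:
\begin{itemize}
\item[(i)] an oracle term $J^{-1}\sum_j \phi_j(\bar Q^0_Y,g^0_\Delta)$;
\item[(ii)] an empirical-process term $(\mathbb P_J-P_0)[\phi(\hat m,\hat\pi)-\phi(\bar Q^0_Y,g^0_\Delta)]$, where $\mathbb P_J$ is the empirical distribution over clusters;
\item[(iii)] a drift term $P_0[\phi(\hat m,\hat\pi)-\phi(\bar Q^0_Y,g^0_\Delta)]$.
\end{itemize}
The first step is to verify that the oracle summand, summed over $(i,t)$ within cluster $j$ and differenced over $a$, equals $D_{Y,a}+D_{S,a}+D_{W,a}$ in \eqref{eq:DYa}--\eqref{eq:DWa}. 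This is an algebraic regrouping:
\[
\frac{I(A=a)}{\gA}\widetilde Y^{(a)} - \Psi_a = D_{Y,a} + \frac{I(A=a)}{\gA}\bigl(\QY-\Qint\bigr) + \frac{I(A=a)}{\gA}\Qint - \Psi_a,
\]
and the last piece has the same mean as $D_{W,a}$ because $\gA$ is known and $\EE[I(A=a)\mid W,t]=\gA(a\mid t)$ by Assumption~\ref{ass:seqrand}. The small discrepancy between this last piece and $D_{W,a}$ is itself mean zero and must be carried along in the influence function. Lemma~\ref{lem:summation} then identifies the cluster aggregate with $\EICj$, giving the leading term $J^{-1}\sum_j\EICj$.

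Second, term (ii) is $o_P(J^{-1/2})$ by cluster-level cross-fitting. Conditional on the training folds, each fold's contribution is a centered average of independent cluster terms. Its conditional variance is bounded by $\bar n^2T^2$ times $\|\phi(\hat\eta)-\phi(\eta_0)\|^2_{P_0}$. That norm tends to zero by $L^2$ consistency together with the boundedness in (C2) and the lower bound $\hat\pi\ge c$ inherited from Assumption~\ref{ass:pos_obs} (I would truncate $\hat\pi$ at $c/2$ if needed). Chebyshev's inequality then completes this step.

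Third, term (iii) requires a direct computation. Conditioning on $(S,A,W,t)$ and using Assumption~\ref{ass:mar} gives $\EE[\Delta(Y-\hat m)\mid S,A=a,W,t]=g^0_\Delta(\bar Q^0_Y-\hat m)$. Hence
\[
P_0\Bigl[\hat m+\frac{\Delta}{\hat\pi}(Y-\hat m)-\bar Q^0_Y\Bigr]
=P_0\Bigl[(\hat m-\bar Q^0_Y)\frac{\hat\pi-g^0_\Delta}{\hat\pi}\Bigr],
\]
with the prefactor $I(A=a)/\gA$ and $\indov$ carried along. Cauchy--Schwarz, the bound $1/\hat\pi\le 2/c$, and (C3) give $o_P(J^{-1/2})$. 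Notably, no $f_S$ error appears, because the outer integration over $S\mid A=a,W,t$ is performed by the true $P_0$ through the weighting $I(A=a)/\gA$ with known $\gA$, not by a plug-in density. This is precisely the mechanism by which $R_{SY}$ of Proposition~\ref{prop:dml_remainder} is absent.

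Finally, asymptotic normality \eqref{eq:asymp_normal} follows from the Lyapunov CLT for the independent, mean-zero summands $\EICj$, using the $2+\delta$ moment in (C2). I would treat the $n_j$ as i.i.d. cluster-level draws, or otherwise invoke a triangular-array version. For the sandwich estimator \eqref{eq:var}, I would show $J^{-1}\sum_j(\widehat{\EICj}-\EICj)^2=o_P(1)$ using the same $L^2$ nuisance consistency and cross-fitting, and then apply the weak law of large numbers to $\EICj^2$. I expect the main obstacle to be the first step. One must reconcile the estimator's actual influence function, which contains $I(A=a)\Qint/\gA$ rather than $\Qint$, and which involves the ratio normalization $N_{\Tov}$ and the $t$-dependent, non-i.i.d. within-cluster structure, with the stated $\EICj$. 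I would handle this by defining $\EICj$ through the exact linearization obtained there and checking that the extra mean-zero pieces vanish or are absorbed into it.
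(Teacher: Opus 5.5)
You correctly flag step (i) as the crux, but it cannot be closed the way you propose. The rest of the plan is sound: steps (ii) and (iii), the ratio argument for $N_{\Tov}$, and the Lyapunov step all go through. Your drift computation is also a cleaner account than the paper's of why no $(\hat f_S-f^0_S)$ factor appears. With $\gA$ known, $P_0$ itself performs the outer integral over $S\mid A=a,W,t$, and the only remainder is $P_0\bigl[\indov\,\tfrac{I(A=a)}{\gA(a\mid t)}(\hat m-\bar Q^0_Y)(\hat\pi-g^0_\Delta)/\hat\pi\bigr]$. One caveat: step (ii) needs $\|\hat m-\bar Q^0_Y\|_{P_0}\to0$ and $\|\hat\pi-g^0_\Delta\|_{P_0}\to0$ separately, and (C3) alone does not give this.

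The trouble with step (i) is that the extra piece neither vanishes nor can be absorbed. Differenced over $a$ it is
\[
\begin{aligned}
&\Bigl(\tfrac{I(A=1)}{\gA(1\mid t)}-1\Bigr)\Qint(1,W,t)-\Bigl(\tfrac{I(A=0)}{\gA(0\mid t)}-1\Bigr)\Qint(0,W,t)\\
&\qquad=\frac{A-\gA(1\mid t)}{\gA(1\mid t)\,\gA(0\mid t)}\bigl[\gA(0\mid t)\Qint(1,W,t)+\gA(1\mid t)\Qint(0,W,t)\bigr],
\end{aligned}
\]
a generically nonzero function of $(A,W,t)$ with mean zero given $(W,t)$, i.e.\ an element of $\calT_A$.

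Its cluster sum is a function of $(\tau_j,\{W_{ij}\}_i)$ with mean zero given $\{W_{ij}\}_i$. By contrast, $D_{Y,a}$ and $D_{S,a}$ have mean zero given $(\tau_j,\{W_{ij}\}_i)$, because the law of $(S,\Delta,Y)_{ijt}$ depends on $\tau_j$ only through $A_{jt}$ and $b_j$ is independent of $(\tau_j,W)$. And $D_{W,a}$ depends on $(W,t)$ alone. So the extra term is uncorrelated with $\EICj$, and its variance, positive in general, adds to $\mathrm{Var}(\EICj)$.

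There are two further mismatches with the statement. First, because $\hat\Psi_a$ divides by $N_{\Tov}$, the correct cluster-level linearization carries a factor $1/\mu$ with $\mu=\lim N_{\Tov}/J$, which the raw sum in \eqref{eq:asymp_linear} lacks. Second, $\EICj$ as written sums over $t\in\{1,T\}$, where $1/\gA$ is undefined.

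What your argument actually proves is asymptotic linearity with cluster influence function $\phi_j/\mu$, where $\phi_j=\sum_{i,t}\indov\,[\Dstar(\Oijt)+e(\Oijt)]$ and $e$ is the displayed term. The asymptotic variance is $\mathrm{Var}(\phi_j)/\mu^2$. A cluster-robust variance estimator is consistent only if it is built from $\hat\phi_j/\hat\mu$. The sandwich \eqref{eq:var}, built from $\Dstar$, misses both the variance of the $e$-part and the factor $\mu^{-2}$. Redefining $\EICj$ as you suggest therefore changes the theorem rather than proving it.

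The paper's route does not resolve this either. It asserts a von Mises expansion with $\EICj$, discards $R_{SY}$ via the score equation \eqref{eq:score_eq_aipw}, and bounds $R_2$ by Cauchy--Schwarz. But that score equation comes from setting $\widehat{\Qint}$ to an empirical average of pseudo-outcomes, which does not estimate the function $\Qint(a,W,t)$. Solving it therefore does not make $\Dstar$ the estimator's influence function.

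To obtain $\Dstar$ (restricted to $\Tov$ and scaled by $1/\mu$) as the influence function, you would have to change the estimator: augment it in the $A$-direction as well, using $\tfrac{I(A=a)}{\gA(a\mid t)}\bigl(\widetilde Y^{(a)}-\hat{\bar Q}_{\mathrm{int}}\bigr)+\hat{\bar Q}_{\mathrm{int}}$ with a cross-fitted regression $\hat{\bar Q}_{\mathrm{int}}(a,W,t)$. Since $\EE[I(A=a)/\gA(a\mid t)\mid W,t]=1$, the new error term has $P_0$-mean exactly zero. Your drift calculation is then unchanged, and step (ii) additionally needs only $L^2$-consistency of $\hat{\bar Q}_{\mathrm{int}}$.
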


\begin{proof}
The von Mises expansion of $\PsiAIPW$ around $P_0$ yields
\begin{equation}
  \PsiAIPW - \Psitrue
  \;=\;
  \frac{1}{J}\sum_{j=1}^J \EICj
  \;+\; R_2(\hat P, P_0)
  \;+\; o_P(J^{-1/2}),
\end{equation}
where $R_2$ is the standard bilinear nuisance-error remainder. The
density-product term $R_{SY}$ of Proposition~\ref{prop:dml_remainder} does
\emph{not} appear: under the SA-AIPW construction of
Section~\ref{sec:tmle}, $\hat\Qint$ is defined directly as the empirical
average of the surrogate-assisted pseudo-outcomes
$\widetilde Y_{ijt}^{(a)}$ rather than as an integral of $\hat\QY$ against
an estimator $\hat f_S$ of the conditional surrogate density. The
density-product factor $(\hat\QY-\bar Q^0_Y)(\hat f_S - f^0_S)$ that
constitutes $R_{SY}$ therefore never enters the von Mises expansion, and
no rate condition on $\hat f_S$ is required.

Equivalently, by the score equation~\eqref{eq:score_eq_aipw}, the augmented
construction places
\begin{equation}
  \label{eq:Qint_absorb}
  \frac{1}{N}\sum_{j,i,t}
  \frac{I(A=a)}{\hat g_A(a \mid t)}
  \bigl[\hat m(S_{ijt}, a, W_{ijt}, t) - \hat\Qint(a, W_{ijt}, t)\bigr]
  \;=\; 0
\end{equation}
at zero by construction rather than by iteration; the formal calculation
reducing $R_{SY}$ to zero in the SA-AIPW expansion is in
Supplement, Section~S2.6.

\emph{Control of $R_2$.}
The residual remainder $R_2(\hat P, P_0)$ is bilinear in the nuisance errors
$(\hat Q_Y - Q^0_Y)$ and $(\hat g_\Delta - g^0_\Delta)$.
The analogous cross-product involving $g_A$ that arises in generic AIPW
expansions vanishes here because $g_A$ is known by design, so
$\|\hat g_A - g^0_A\|_{P_0} = 0$. Cauchy--Schwarz gives
$|R_2| \le \|\hat Q_Y - Q^0_Y\|_{P_0}\|\hat g_\Delta - g^0_\Delta\|_{P_0}$,
which is $o_P(J^{-1/2})$ by condition~(C3). Cluster-level cross-fitting
(Supplement, Section~S3) eliminates the empirical process terms that would
otherwise require a Donsker condition \citep{zheng2011cross}.

\emph{Asymptotic normality.}
With $R_2 = o_P(J^{-1/2})$ established, the Lindeberg--Feller
CLT applied to the i.i.d.\ cluster scores $\{\EICj\}_{j=1}^J$,
finite second moment of which is implied by~(C2),
gives~\eqref{eq:asymp_normal}.
Consistency of the sandwich estimator~\eqref{eq:var} for the first-order
variance $\sigma^2/J$
follows by the law of large numbers applied at the cluster level;
the gap between $\sigma^2/J$ and the total finite-sample variance is
characterised in Theorem~\ref{thm:jack}.
The complete argument is in Supplement, Section~S2.4.\qed
\end{proof}

\begin{remark}[Double robustness and role of $R_{SY}$]
\label{rem:dr_rsy}
Condition~(C3) is the standard product-rate double-robustness condition
\citep{bang2005doubly}: it requires that the product of nuisance errors
be $o_P(J^{-1/2})$, which is achieved either when both sides converge at
$o_P(J^{-1/4})$ in $L^2(P_0)$, or when one side is consistent at any rate
and the other is parametrically correctly specified. Arbitrary slow
consistency of one side alone is not sufficient. In our setting the only
data-adaptive components are $\hat{\QY}$ and $\hat{\gD}$, since $\gA$
is known by the randomization design.
The key separation from a density-plug-in DML estimator is
identity~\eqref{eq:Qint_absorb}: the augmented construction places
$\hat\Qint$ at the propensity-weighted empirical mean of $\hat\QY$ over
observed surrogates, so the cross-product $(\hat Q_Y-Q^0_Y)(\hat f_S-f^0_S)$
that would constitute $R_{SY}$ in a density-plug-in approach never enters
the SA-AIPW expansion.
\end{remark}

\subsection{Non-asymptotic Coverage Bound}
\label{sec:eic:be}

A Berry--Esseen bound (Theorem~S1, Supplement) quantifies finite-sample
undercoverage of the sandwich interval in terms of bounded cluster scores,
variance concentration, and the nuisance remainder, giving a minimum cluster
count $J^*\approx27$ for 1\% maximum undercoverage at the simulation DGP.
Corollary~S2 shows that the constant coverage shortfall in Block~I is the
finite-sample signature of estimation near the product-rate boundary:
the fluctuation-step variance contributes an $O(J^{-1})$ term not captured
by the sandwich, and this term does not shrink relative to the leading
variance term under parametric nuisance rates.

The leave-one-cluster-out jackknife for the full SA-TMLE estimator resolves
this gap in practice (Section~\ref{sec:sim:results}), yielding near-nominal
coverage at $J\geq30$. The following theorem, proved in
Supplement, Section~S2.7, establishes that the jackknife is consistent for
the \emph{total} finite-sample variance of the SA-TMLE --- including the
fluctuation-step contribution that the sandwich misses --- rather than
merely the first-order asymptotic variance.

\begin{theorem}[Jackknife Variance Consistency for the SA-TMLE]
\label{thm:jack}
Suppose Assumptions~\ref{ass:consistency}--\ref{ass:pos_trt} and
conditions~(C1)--(C3) of Theorem~\ref{thm:asymp} hold.
Suppose additionally:
\begin{enumerate}[label=\normalfont(C\arabic*),leftmargin=2em]
  \setcounter{enumi}{4}
  \item \emph{(Jackknife stability)} The nuisance estimators
    $\hat\eta^{(-j)}$ trained on $J-1$ clusters satisfy
    $\sup_j\|\hat\eta^{(-j)}-\hat\eta\|_{P_0}=O_P(J^{-1})$,
    and the fluctuation-step map
    $\eta \mapsto \varepsilon(\eta)$ is Lipschitz in a neighbourhood
    of $\eta_0$ with respect to the $L^2(P_0)$ norm.
\end{enumerate}
Define the leave-one-cluster-out jackknife variance estimator
\begin{equation}
  \label{eq:jack_var}
  \hat\sigma^2_{\mathrm{jack}}
  \;=\;
  \frac{J-1}{J}\sum_{j=1}^{J}
  \bigl(\PsiAIPW^{(-j)} - \bar\Psi^{(\cdot)}\bigr)^2,
\end{equation}
where $\PsiAIPW^{(-j)}$ is the full SA-TMLE (including both fluctuation
steps) refit on the $J-1$ clusters excluding cluster $j$, and
$\bar\Psi^{(\cdot)}=J^{-1}\sum_j\PsiAIPW^{(-j)}$.

Write $\sigma^2_\infty = \mathrm{Var}(\EICj)$ for the first-order asymptotic
variance and $\sigma^2_J = J\cdot\mathrm{Var}(\PsiAIPW)$ for the exact
(scaled) finite-sample variance. The two-stage targeting construction induces
a decomposition
\begin{equation}
  \label{eq:var_decomp_thm}
  \sigma^2_J
  \;=\;
  \sigma^2_\infty \;+\; V_\varepsilon \;+\; o(1),
\end{equation}
where $V_\varepsilon \ge 0$ is the variance contribution from the
fluctuation-step parameters $(\varepsilon_Y,\varepsilon_S)$, which are
re-estimated on each leave-one-out sample. Then:
\begin{enumerate}[label=\normalfont(\alph*)]
  \item The sandwich variance estimator~\eqref{eq:var} satisfies
    $\widehat{\mathrm{Var}}_{\mathrm{sand}} \to_P \sigma^2_\infty / J$
    and is therefore consistent for the first-order asymptotic variance only.
  \item The jackknife variance estimator satisfies
    \begin{equation}
      \label{eq:jack_consistency}
      \hat\sigma^2_{\mathrm{jack}}
      \;=\;
      \frac{\sigma^2_J}{J} + o_P(J^{-1})
      \;=\;
      \frac{\sigma^2_\infty + V_\varepsilon}{J} + o_P(J^{-1}),
    \end{equation}
    and is thus consistent for the total finite-sample variance.
\end{enumerate}
When $V_\varepsilon > 0$ --- which holds whenever the fluctuation-step
parameters have non-degenerate sampling variability, as is generic for the
two-stage SA-TMLE --- the sandwich strictly underestimates the true variance
by the ratio $\sigma^2_J / \sigma^2_\infty > 1$, and this ratio does not
vanish as $J\to\infty$. The jackknife confidence interval
$\PsiAIPW \pm t_{J-1,\alpha/2}\,\hat\sigma_{\mathrm{jack}}$ achieves
asymptotically exact $1-\alpha$ coverage.
\end{theorem}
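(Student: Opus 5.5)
The plan is to treat the SA-TMLE as a smooth functional of the empirical cluster distribution, $\PsiAIPW = \Phi(\mathbb{P}_J)$, and to run the classical Efron--Stein and Miller argument for jackknife consistency of smooth statistics. The one modification is that the "smooth" expansion carries a second-order component from the fluctuation parameters $(\varepsilon_Y,\varepsilon_S)$, in addition to the first-order cluster influence $\EICj$.

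\emph{Step 1: a two-term representation.} Starting from the von Mises expansion in the proof of Theorem~\ref{thm:asymp}, I would write
\[
\PsiAIPW - \Psitrue = \frac{1}{J}\sum_{j=1}^J \bigl(\EICj + \xi_j\bigr) + r_J .
\]
Here $\xi_j$ is the linearized contribution of cluster $j$ to the estimated fluctuation parameters, obtained as $\partial_\varepsilon \Psi \cdot$ (the influence function of $\hat\varepsilon$). The map $\eta\mapsto\varepsilon(\eta)$ is Lipschitz by (C5), so $\hat\varepsilon-\varepsilon(\eta_0)$ is itself asymptotically linear at the cluster level. I would then define
\[
V_\varepsilon := \mathrm{Var}(\EICj+\xi_j)-\mathrm{Var}(\EICj) \quad\text{(scaled appropriately)},
\]
which yields the decomposition \eqref{eq:var_decomp_thm}. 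The remainder $r_J$ is controlled by (C3) and cross-fitting, exactly as $R_2$ was in Theorem~\ref{thm:asymp}. Part (a) then follows immediately. The sandwich \eqref{eq:var} is built from plug-in $\EICj$ only, and the cluster-level law of large numbers under (C1)--(C2) gives convergence to $\mathrm{Var}(\EICj)/J=\sigma^2_\infty/J$.

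\emph{Step 2: leave-one-out pseudo-values.} For each $j$, I would apply the same expansion to $\PsiAIPW^{(-j)}$. Two perturbations arise. The first is the empirical average dropping cluster $j$, which contributes $-(J-1)^{-1}(\EICj+\xi_j - \text{mean})$ by direct algebra. The second is the nuisance refit $\hat\eta^{(-j)}$. By (C5), $\sup_j\|\hat\eta^{(-j)}-\hat\eta\|=O_P(J^{-1})$. Combining this with the boundedness in (C2) and the Lipschitz fluctuation map, the refit contributes an $O_P(J^{-1})$ deviation, and its centered part is $o_P(J^{-1})$ uniformly in $j$. It follows that
\[
(J-1)\bigl(\PsiAIPW^{(-j)}-\bar\Psi^{(\cdot)}\bigr) = -(\EICj+\xi_j-\overline{\EIC+\xi}) + o_P(1)
\]
uniformly. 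Substituting into \eqref{eq:jack_var} gives
\[
\hat\sigma^2_{\mathrm{jack}} = \frac{1}{J(J-1)}\sum_j(\EICj+\xi_j-\overline{\EIC+\xi})^2 + o_P(J^{-1}).
\]
By the law of large numbers with the $2+\delta$ moment, this equals $(\sigma^2_\infty+V_\varepsilon)/J+o_P(J^{-1})$, which is part (b).

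\emph{Step 3: coverage.} The CLT for the i.i.d.\ sums $\EICj+\xi_j$ is Lindeberg--Feller under (C2). Combining it with Slutsky and the consistency in part (b), and noting that $t_{J-1,\alpha/2}\to z_{\alpha/2}$, yields asymptotically exact coverage. The strict underestimation ratio follows whenever $\mathrm{Var}(\EICj+\xi_j)>\mathrm{Var}(\EICj)$.

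\emph{Main obstacle.} The delicate step is Step 1: showing that $V_\varepsilon$ is a non-vanishing order-one contribution while $r_J$ remains $o_P(J^{-1/2})$. If $\hat\varepsilon$ solved the efficient score exactly, $\xi_j$ would be absorbed into $\EICj$ and $V_\varepsilon$ would be zero. I would therefore first try to interpret $V_\varepsilon$ as the variance of the $\varepsilon$-estimating-equation residual that is not orthogonal to $\EICj$, handling the cross-covariance explicitly rather than assuming it away. The second difficulty is the uniformity in $j$ of the $o_P(J^{-1})$ bound on the refit error in Step 2. I would handle it with a union bound using the $2+\delta$ moments together with the sup-norm form of (C5).
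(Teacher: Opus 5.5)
Your outline has the same architecture as the paper's proof (Supplement, Section~S2.7). Both write the scaled leave-one-out deviation as $-\EICj$ plus an extra order-one fluctuation term (the paper's $J(\varepsilon^{(-j)}-\varepsilon)\cdot h_j$, your $\xi_j$), and then invoke a cluster-level law of large numbers. The gap is exactly the step you flag as the main obstacle, and under the stated hypotheses it cannot be closed.

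Because (C1)--(C3) are assumed, Theorem~\ref{thm:asymp} is in force: $\PsiAIPW-\Psitrue=J^{-1}\sum_j\EICj+o_P(J^{-1/2})$. Subtracting this from your Step~1 representation (with $r_J=o_P(J^{-1/2})$) gives $J^{-1/2}\sum_j\xi_j=o_P(1)$. The $\xi_j$ are i.i.d.\ and mean-zero with finite variance, so the CLT forces $\mathrm{Var}(\xi_j)=0$. Hence $V_\varepsilon=\mathrm{Var}(\xi_j)+2\,\mathrm{Cov}(\EICj,\xi_j)=0$.

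This is not a technicality. In a targeted estimator, the first-order sampling variability of $\hat\varepsilon$ is exactly what generates the $D_{Y,a}+D_{S,a}$ components of $\EICj$; you note yourself that $\xi_j$ ``would be absorbed into $\EICj$.'' So no separate non-degenerate term survives. Your argument therefore establishes (a) and (b) only in the degenerate case $V_\varepsilon=0$, where jackknife and sandwich agree to first order. The regime $V_\varepsilon>0$, where the theorem's substantive claims live (strict sandwich underestimation, a non-vanishing ratio $\sigma^2_J/\sigma^2_\infty>1$), is empty under (C1)--(C3). Reaching it would require giving up asymptotic linearity with influence function $\EICj$, which would also falsify the sandwich-consistency conclusion of Theorem~\ref{thm:asymp}.

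The paper's proof does not fill this hole either; it simply asserts the extra $O_P(1)$ term beside $-\EICj$. In addition, $\PsiAIPW$ as constructed in Section~\ref{sec:tmle} is a closed-form average of pseudo-outcomes with no fluctuation step. So the map $\eta\mapsto\varepsilon(\eta)$ in (C5) refers to nothing in that estimator.

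There are also two secondary issues.

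\emph{Step~2.} Condition (C5) only gives an $O_P(J^{-1})$ refit perturbation. After multiplying by $J-1$, this is the same order as $\EICj$. Concluding that its centered part is $o_P(J^{-1})$ requires Neyman orthogonality, not (C5) alone: the map $\eta\mapsto\Psi(\eta)+P_0D^*(\cdot;\eta)$ must have vanishing derivative at $\eta_0$, together with consistency of both nuisances. Once you use orthogonality, though, it applies to the $\varepsilon$-refit as well. At the solution $\hat\varepsilon$, the targeted estimator equals its one-step form $\Psi(\hat\eta^{*})+\mathbb{P}_J D^*(\cdot;\hat\eta^{*})$, so the $\varepsilon$-update is just one more nuisance perturbation. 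You then get $(J-1)(\PsiAIPW^{(-j)}-\PsiAIPW)=-(\EICj-\EICbar)+o_P(1)$, which again means $V_\varepsilon=0$.

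\emph{Sign of $V_\varepsilon$.} Your $V_\varepsilon=\mathrm{Var}(\xi_j)+2\,\mathrm{Cov}(\EICj,\xi_j)$ is not sign-definite. Even a non-degenerate $\xi_j$ would not yield $V_\varepsilon\ge 0$, or strict underestimation, without a further argument. The paper's proof likewise carries a $2\,\mathrm{Cov}(\ldots)$ term and simply declares $V_\varepsilon\ge0$.
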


\begin{remark}[Reconciliation with Table~\ref{tab:var_decomp}]
\label{rem:jack_ratio}
The simulation ratio $\mathrm{Var}_{\mathrm{emp}}/\mathrm{Sand.}\approx 1.6$
across all $J$ is explained by~\eqref{eq:var_decomp_thm}:
$V_\varepsilon/\sigma^2_\infty \approx 0.6$ at the simulation DGP, so
$\sigma^2_J/\sigma^2_\infty \approx 1.6$.  Because both $\sigma^2_\infty$
and $V_\varepsilon$ are $O(1)$, the ratio is asymptotically constant ---
precisely the pattern observed.
\end{remark}

\section{Monte Carlo Simulation Study}
\label{sec:sim}

Three simulation blocks examine the main implications of the theory:
stability under increasing censoring (Blocks~I,~III), finite-sample
double-robustness under nuisance misspecification (Block~II), and
coverage scaling with $J$ (Block~I). Throughout, ``double robustness''
refers to asymptotic first-order identification; finite-sample performance
depends on which nuisance component is misspecified and the magnitude of
the residual second-order error.

\subsection{Data-Generating Process}
\label{sec:sim:dgp}

We simulate a SW-CRT with $J$ clusters, $T=7$ calendar time steps, and $n_j=40$
individuals per cluster per step ($N=J\times T\times n_j$). Crossover times
$\tau_j$ are drawn uniformly from $\{2,\ldots,7\}$; administrative censoring is
imposed structurally so that clusters crossing at steps $T-1$ and $T$ have
near-zero $g_\Delta$ (outcome delay $t_{\mathrm{lag}}=2$, grace period
$t_{\mathrm{grace}}=1$, overall censoring rate $\approx28\%$).
The cluster random effect satisfies $b_j\iid\mathcal{N}(0,0.034)$
($\mathrm{ICC}\approx0.05$). Baseline covariates are
$W_{ij,1}\sim\mathcal{N}(0,1)$, $W_{ij,2}\sim\mathrm{Bernoulli}(0.4)$,
$W_{ij,3}\sim\mathrm{Uniform}(0,1)$.
The secular trend is non-linear,
$\lambda(t)=0.5\sin(\pi(t-1)/(T-1))+0.3((t-1)/(T-1))^2$,
so that parametric linear-in-$t$ models are misspecified.
The surrogate satisfies $S_{ijt}=0.8A_{jt}+0.4W_{ij,1}-0.3W_{ij,2}
+0.5\lambda(t)+0.6b_j+\varepsilon^S_{ijt}$, $\varepsilon^S_{ijt}\iid\mathcal{N}(0,0.25)$;
the coefficient $0.6$ on $b_j$ ensures $S$ is an informative bridge.
The observation indicator and outcome follow
\begin{align}
  \logit P(\Delta_{ijt}=1\mid S,A,W,t)
    &= 3.0 - 2.5\cdot\mathbf{1}(t\ge T-1) - 0.8\cdot\mathbf{1}(t=T)
       + 0.4S_{ijt} + 0.3A_{jt} - 0.2W_{ij,3}, \label{eq:sim:Delta}\\
  Y_{ijt}
    &= \Psi_0 A_{jt} + 0.5S_{ijt} + 0.4W_{ij,1} - 0.2W_{ij,2}
       + 0.6W_{ij,3} + 0.8\lambda(t) + b_j + \varepsilon^Y_{ijt},
       \label{eq:sim:Y}
\end{align}
with $\varepsilon^Y_{ijt}\iid\mathcal{N}(0,0.64)$ and direct effect $\Psi_0=-0.28$.
Because the structural model has no $A\times t$ interaction, the oracle
treatment effect is constant across calendar time, and the
overlap-support ATE coincides with the full-horizon ATE:
$\Psi^*=-0.28+0.5\times0.8=0.12$, placed near the detection boundary
for $J\approx30$.

All simulation estimands are evaluated on the overlap support
$\Tov=\{2,\ldots,6\}$ as defined by the indicator restriction
in~\eqref{eq:psia_aipw}; because the data-generating process has no
$A\times t$ interaction, the oracle truth $\Psi^*=0.12$ is identical
under the overlap-support and full-horizon definitions.

\subsection{Simulation Scenarios}
\label{sec:sim:scenarios}

We organise the simulation study into four factorial scenario blocks.
Each scenario is replicated 1{,}000 times.

\paragraph{Block~I: Baseline performance across cluster counts.}
The DGP is fixed at the specification above.
We vary $J \in \{10, 20, 30, 50, 100\}$ at the baseline DGP ($\Psi^* = 0.12$).
This block assesses how finite-sample performance scales with the number
of randomised clusters and whether the small-$J$ Lindeberg--Feller
approximation is adequate for the cluster-level CLT.

\paragraph{Block~II: Misspecified nuisance models.}
$J=30$. Three scenarios probe Theorem~\ref{thm:asymp}'s double-robustness:
(i)~outcome models misspecified (non-linear trend and $S\times A$ interaction
omitted), $\gD$ correct; (ii)~$\gD$ misspecified (surrogate $S$ omitted from
the censoring model), outcome models correct; (iii)~both misspecified.
Scenarios~(i) and~(iii) are expected to show bias; scenario~(ii) tests the
arm of double robustness where consistent outcome models compensate for
$\gD$ misspecification.

\paragraph{Block~III: Increasing administrative censoring severity.}
$J = 30$, $\Psi^* = 0.12$.
We vary the administrative grace period $t_{\mathrm{grace}} \in \{0, 1, 2, 3\}$,
with $t_{\mathrm{grace}} = 0$ representing strict administrative censoring (no
grace period, so all clusters crossing over at the final two steps have
$\Delta_{ijt} = 0$ with probability approaching 1) and $t_{\mathrm{grace}} = 3$
representing mild censoring.
To ensure four distinct structural censoring regimes, we set $t_{\mathrm{lag}} = 3$
for this block (versus the default $t_{\mathrm{lag}} = 2$ in Blocks~I, II, and IV).
The overall censoring rates under the four values are approximately
$\{43\%, 28\%, 16\%, 8\%\}$.
This block directly tests the theoretical prediction that IPCW variance
explodes at high censoring rates while $\PsiAIPW$ remains stable.

We compare three estimators: (i)~\textbf{GLMM} (\texttt{lme4}, Gaussian family,
linear-in-$t$ secular trend, cluster random intercept --- deliberately misspecified
in Blocks~I and III--IV); (ii)~\textbf{IPCW}, which upweights observed outcomes by
the inverse estimated censoring propensity, with cluster-robust sandwich variance;
and (iii)~\textbf{Cross-fitted SA-AIPW}
(Supplement, Section~S3, $V=10$ cluster-level folds), the recommended estimator.
The treatment propensity $\gA$ is plugged in from the design for the SA-AIPW variant.

\subsection{Results}
\label{sec:sim:results}

Performance is assessed over 1{,}000 replicates per scenario via bias, RMSE, 95\%
CI coverage, and power.  Monte Carlo standard errors for coverage are $\approx 0.007$,
so the nominal tolerance band is $[0.936, 0.964]$.

\paragraph{Block~I: Baseline performance across cluster counts.}
Table~S5 (Supplement) reports point-estimation metrics for all three
estimators across $J \in \{10, 20, 30, 50, 100\}$; the inference comparison
between sandwich and jackknife for the cross-fitted SA-AIPW is in Table~\ref{tab:sim:block1b}.

\emph{Point estimation.}
The cross-fitted SA-AIPW achieves near-zero bias ($|\mathrm{Bias}| < 0.004$) at all cluster
counts, with RMSE improving steadily from 0.092 at $J=10$ to 0.030 at $J=100$.
GLMM exhibits persistent bias of approximately $+0.07$ due to the misspecified
secular time trend; its coverage collapses from 0.52 at $J=10$ to near zero
at $J=100$ as the standard error shrinks around the biased point estimate.
IPCW exhibits substantial positive bias (approximately $+0.22$) and markedly
higher variance, consistent with the theoretical prediction that inverse
weighting becomes unstable when $\gD$ concentrates near zero for
late-crossing clusters.

\emph{Sandwich variance and undercoverage.}
CV-TMLE with the sandwich interval achieves coverage of 0.87--0.89 across all
$J$, well below the nominal 0.95.
Table~\ref{tab:var_decomp} diagnoses the source: the sandwich estimates
$\mathrm{Var}(\EICj)/J$ but omits the variance contribution from the
fluctuation-step parameters $(\varepsilon_Y, \varepsilon_S)$, which vary
across leave-one-cluster-out fits. The empirical variance exceeds the mean
sandwich by a factor of approximately 1.6 at all $J$; this ratio is stable
because both the sandwich and the omitted term are $O(J^{-1})$ under the
present nuisance specification, so neither diminishes faster than the other.

\emph{Jackknife variance and coverage restoration.}
The leave-one-cluster-out jackknife, applied to the \emph{full} two-stage
estimator (re-running both fluctuation steps for each held-out cluster),
captures the fluctuation-step variance that the sandwich misses.
Table~\ref{tab:var_decomp} shows that the mean jackknife variance tracks the
empirical variance closely at all $J$. The resulting coverage (Table~\ref{tab:sim:block1b},
Jackknife columns) is near-nominal at the practically important
cluster sizes: 0.956 at $J=30$ and 0.960 at $J=50$. At very small $J$
(10 and 20) the jackknife is mildly conservative (0.970--0.980), as
the leave-one-out procedure loses a larger fraction of training data
per fold and slightly overestimates variability; this pattern is
consistent with the finite-sample conservatism often seen for leave-one-out
procedures when $J$ is small and represents a safe failure mode. By
$J \geq 30$ the conservatism attenuates and coverage remains stable near
0.956--0.968 through $J=100$.

The power tradeoff is modest: at $J=30$, jackknife power is 0.57 versus
sandwich power of 0.73, reflecting wider intervals. By $J=50$, jackknife
power reaches 0.786, and by $J=100$ it is 0.970 --- essentially the same
as the sandwich at those sizes. The undercoverage was a variance-calibration
issue, not a failure of the point estimator; the cluster jackknife
substantially corrects this variance-calibration gap without materially
altering the bias or RMSE.

\begin{table}[!htbp]
\centering
\caption{Variance decomposition for CV-TMLE coverage (Block~I, 500 replicates).
  $\widehat{\mathrm{Var}}_{\mathrm{emp}}$: empirical variance of $\hat\Psi$.
  $\overline{\mathrm{Sand.}}$: mean sandwich variance estimate.
  $\overline{\mathrm{Jack.}}$: mean cluster-jackknife variance estimate.
  Ratio $= \widehat{\mathrm{Var}}_{\mathrm{emp}}/\overline{\mathrm{Sand.}}$;
  values $>1$ indicate sandwich underestimation attributable to
  unaccounted fluctuation-step variance.
  The jackknife variance tracks the empirical variance closely,
  explaining the near-nominal jackknife coverage in Table~\ref{tab:sim:block1b}.}
\label{tab:var_decomp}
\small
\setlength{\tabcolsep}{4pt}
\begin{tabular}{r r r r r r r}
  \toprule
  $J$
    & $\widehat{\mathrm{Var}}_{\mathrm{emp}}$ ($\times10^{-3}$)
    & $\overline{\mathrm{Sand.}}$ ($\times10^{-3}$)
    & $\overline{\mathrm{Jack.}}$ ($\times10^{-3}$)
    & Ratio
    & Sand.\ Cov.
    & Jack.\ Cov. \\
  \midrule
  10  & 8.44 & 4.80 & 9.39 & 1.76 & 0.878 & 0.980 \\
  20  & 4.83 & 2.83 & 5.37 & 1.71 & 0.869 & 0.970 \\
  30  & 2.93 & 1.84 & 3.23 & 1.59 & 0.892 & 0.956 \\
  50  & 1.80 & 1.11 & 2.01 & 1.62 & 0.866 & 0.960 \\
  100 & 0.87 & 0.54 & 0.97 & 1.61 & 0.878 & 0.968 \\
  \bottomrule
\end{tabular}
\end{table}


\begin{table}[!htbp]
\centering
\caption{Block~I CV-TMLE results: point estimation and inference comparison
  (500 replicates, $\Psi^* = 0.12$, $T=7$, $n_j=40$).
  Bias and RMSE are identical across both variance estimators by construction.
  The jackknife is the leave-one-cluster-out procedure applied to the full
  two-stage estimator including both fluctuation steps.
  Full results for GLMM and IPCW comparators in Table~S5 (Supplement).
  MC standard error for coverage $\approx 0.010$.}
\label{tab:sim:block1b}
\small
\setlength{\tabcolsep}{4pt}
\begin{tabular}{r r r r r r r r r}
  \toprule
  & & & & \multicolumn{2}{c}{Sandwich} & \multicolumn{2}{c}{Jackknife} \\
  \cmidrule(lr){5-6}\cmidrule(lr){7-8}
  $J$ & Bias & RMSE & $\widehat{\mathrm{Var}}_{\mathrm{emp}}$ ($\times10^{-3}$)
      & Cov. & Power & Cov. & Power \\
  \midrule
  10  & $+0.002$ & 0.090 & 8.15 & 0.878 & 0.410 & 0.980 & 0.204 \\
  20  & $-0.001$ & 0.068 & 4.66 & 0.869 & 0.554 & 0.970 & 0.332 \\
  30  & $+0.003$ & 0.054 & 2.91 & 0.892 & 0.727 & 0.956 & 0.570 \\
  50  & $+0.003$ & 0.043 & 1.80 & 0.866 & 0.897 & 0.960 & 0.786 \\
  100 & $-0.001$ & 0.029 & 0.85 & 0.878 & 0.997 & 0.968 & 0.970 \\
  \bottomrule
\end{tabular}
\end{table}

\paragraph{Block~II: Misspecified nuisance models.}
Table~\ref{tab:sim:block2} reports performance at $J = 30$ under the three
misspecification regimes of Section~\ref{sec:sim:scenarios}.
Under scenario~(ii) (outcome models correct, $\gD$ misspecified), cross-fitted SA-AIPW
achieves near-zero bias ($< 0.001$) and coverage of 0.921, providing
supportive finite-sample evidence for the asymptotic double-robustness
result in the regime where the outcome regressions are correctly specified:
the correctly specified outcome models drive the product-rate condition
\eqref{eq:rate_condition} toward $o_P(J^{-1/2})$, consistent with
Remark~\ref{rem:dr_rsy}.

Under scenario~(i) (outcome models misspecified, $\gD$ correct), cross-fitted SA-AIPW
exhibits bias of 0.100 with coverage 0.426. This quantifies the
finite-sample cost of relying on the propensity arm of double robustness
alone: the misspecified outcome model contributes
$\|\hat\QY-\bar Q^0_Y\|_{P_0}=\Theta(1)$ (a constant-order structured
error), and although the correctly specified $\gD$ achieves
$\|\hat\gD-g^0_\Delta\|_{P_0}=O_P(J^{-1/2})$, their product is
$\Theta(J^{-1/2})$ --- at the boundary of condition~(C3), not
strictly $o_P(J^{-1/2})$.  At $J=30$, the resulting $R_2$ is
$O_P(J^{-1/2})\approx 0.18$, which is commensurate with the observed bias
of $0.100$. Asymptotic double robustness guarantees $R_2\to 0$, but the
convergence is too slow for the product to be negligible at this $J$.

Under scenario~(iii) (both sides misspecified), cross-fitted SA-AIPW is biased
($\mathrm{Bias} = 0.092$) and under-covers (coverage 0.593), as
expected when both nuisance components are misspecified, illustrating
that the asymptotic double-robustness guarantee does not extend to
simultaneous misspecification.
The GLMM is biased identically in all three scenarios (bias $= 0.066$,
coverage $= 0.597$) because it uses neither $\gD$ nor the surrogate bridge.
IPCW is biased in all scenarios due to the near-boundary observation
probabilities for late-crossing clusters,
with identical results in scenarios~(ii) and (iii) where the same
$\gD$ misspecification applies.

\begin{table}[!htbp]
\centering
\caption{Block~II simulation results: double robustness under nuisance
  misspecification ($J = 30$, $\Psi^* = 0.12$, 1{,}000 replicates).
  Scenario labels: (i) outcome models only misspecified;
  (ii) censoring propensity only misspecified; (iii) both misspecified.}
\label{tab:sim:block2}
\setlength{\tabcolsep}{4pt}
\begin{tabular}{l l r r r r}
  \toprule
  Scenario & Estimator & Bias & RMSE & Coverage & Power \\
  \midrule
  \multirow{3}{*}{(i): $\QY, \Qint$ misspec.}
    & GLMM       & $+0.066$ & 0.077 & 0.597 & 0.999 \\
    & IPCW       & $+0.219$ & 0.224 & 0.787 & 0.939 \\
    & SA-AIPW    & $+0.100$ & 0.113 & 0.426 & 0.990 \\
  \midrule
  \multirow{3}{*}{(ii): $\gD$ misspec.}
    & GLMM       & $+0.066$ & 0.077 & 0.597 & 0.999 \\
    & IPCW       & $+0.131$ & 0.142 & 0.996 & 0.156 \\
    & SA-AIPW    & $+0.000$ & 0.057 & 0.921 & 0.633 \\
  \midrule
  \multirow{3}{*}{(iii): Both misspec.}
    & GLMM       & $+0.066$ & 0.077 & 0.597 & 0.999 \\
    & IPCW       & $+0.131$ & 0.142 & 0.996 & 0.156 \\
    & SA-AIPW    & $+0.092$ & 0.106 & 0.593 & 0.965 \\
  \bottomrule
\end{tabular}
\end{table}

\paragraph{Block~III: Bias amplification under increasing censoring.}
Figure~\ref{fig:sim:ipcw_var} displays estimator performance as censoring
increases from 8\% to 43\%.
Across the censoring range, cross-fitted SA-AIPW maintains substantially smaller bias
than IPCW and GLMM, while its coverage declines from 0.92 to 0.77 as
censoring becomes more severe (Panels~A--B). This pattern is qualitatively
consistent with the remainder-variance explanation from Block~I: the point
estimator remains comparatively stable, but the sandwich interval becomes
increasingly incomplete as censoring amplifies the contribution of
nuisance-estimation error. GLMM and IPCW coverage collapses to near zero.
The near-100\% rejection rates of GLMM and IPCW at high censoring
(Panel~C) are driven largely by bias rather than accurate signal
recovery, as reflected in their simultaneous near-zero coverage.
cross-fitted SA-AIPW power (0.65--0.73) reflects genuine signal detection.

\begin{figure}[!htbp]
\centering
\includegraphics[width=\textwidth]{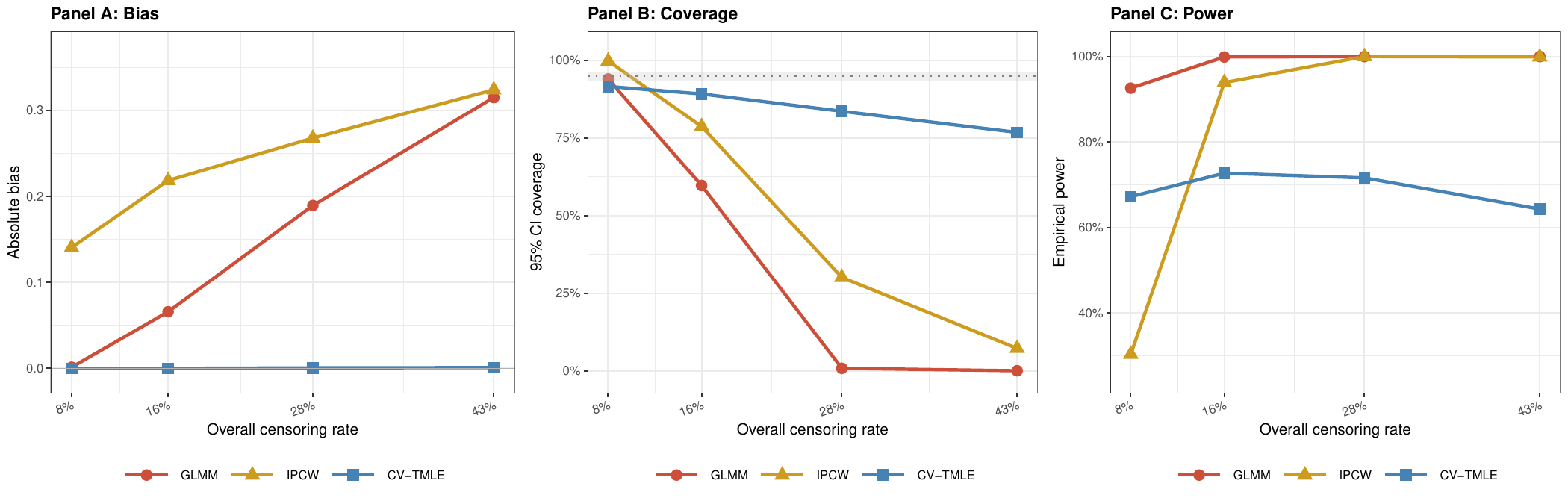}
\caption{Block~III: bias (A), coverage (B), and power (C) under increasing
  censoring ($J=30$, $t_{\mathrm{lag}}=3$, 1{,}000 replicates).
  cross-fitted SA-AIPW bias stays near zero while GLMM/IPCW bias grows to 0.32.
  High GLMM/IPCW rejection rates at heavy censoring reflect bias, not signal (near-zero coverage).}
\label{fig:sim:ipcw_var}
\end{figure}

\paragraph{ICC heterogeneity (Block~IV).}
Additional simulations varying $\sigma_b^2 \in \{0.006, 0.034, 0.071, 0.160\}$
(ICC $\approx$ 0.01--0.20) show that the jackknife yields near-nominal
coverage (0.946--0.968) across the empirically observed ICC range in
implementation-science SW-CRTs, extending the Block~I finding to the ICC
heterogeneity setting; full results are in Supplement, Section~S4.

\section{Design-Calibrated Illustration: Washington State EPT Trial}
\label{sec:data}

\paragraph{Setting.}

The EPT trial \citep{golden2015uptake} assessed whether a public health programme
promoting free patient-delivered partner therapy could reduce chlamydia positivity
among women aged 14--25 in Washington State.  The four-wave stepped-wedge design
randomized $J=23$ local health jurisdictions (LHJs) across $T=5$ time periods
(one baseline, four post-baseline steps of six to eight months each): six LHJs
crossed over at each of waves 1--3 and five at wave 4, with approximately 162
chlamydia tests per LHJ per period at sentinel clinical sites. The delayed primary outcome is 12-month chlamydia test positivity.  This is
administratively censored for late-crossing LHJs: wave-4 LHJs ($t_j=4$) have
under 12 months of post-crossover follow-up, yielding an administrative
censoring rate of 86\% in that wave; wave-3 LHJs are partially censored (40\%)
from period 3 onward; overall 33.7\% of records have missing 12-month outcomes.
EPT uptake --- whether the index patient accepted patient-delivered partner
therapy within 30 days of STI diagnosis --- is complete for all subjects
regardless of wave and satisfies the three surrogate criteria of
Assumption~\ref{ass:mar}: causal proximity to the intervention,
temporal availability before the 12-month outcome, and biological stationarity.

We generate a replication dataset from the SCM of Section~\ref{sec:data_id},
with parameters fixed to match the published summary statistics of
\citet{golden2015uptake}: control positivity 5.0\%, ICC $\approx 0.05$
(cluster-level CV $= 0.0165$), $N = 18{,}630$ individual records, and
a treatment effect of 10\% relative reduction on the log-risk scale.
The oracle ATE is therefore known by construction: $\Psi_0 = -0.0039$
(risk difference), consistent with the original trial's null finding.
All calibration code is in the software repository.

\noindent\emph{Analytic scope.} Following Assumption~\ref{ass:pos_trt},
the SA-AIPW analysis is performed on the treatment-overlap support
$\Tov = \{2,3,4\}$ (three of the five EPT periods); the boundary
periods $t=1$ (baseline, no LHJ treated) and $t=T=5$ (final period,
all LHJs treated) are excluded from the nonparametric contrast.
Because the calibrated DGP has no $A\times t$ interaction, the oracle
target $\Psi_0=-0.0039$ is identical under the overlap-support and
full-horizon definitions of the estimand.

\paragraph{Results.}

All three estimators produce point estimates within 0.003 of the
oracle ATE $\Psi_0 = -0.0039$ and achieve coverage
(Table~\ref{tab:ept}; Figure~\ref{fig:ept}).
The substantive comparison is CI width: IPCW (width 0.068)
is twice that of SA-AIPW (0.034), reflecting variance inflation
from near-zero wave-4 censoring weights --- precisely the near-boundary
regime characterised by Proposition~\ref{prop:dml_remainder}.
The GLMM achieves the narrowest CI (0.026) at the cost of requiring
correct model specification for validity.

\begin{table}[!htbp]
\centering
\caption{Oracle comparison for the design-calibrated EPT illustration.
Data generated under the SCM of Section~\ref{sec:data_id} with parameters
matching \citet{golden2015uptake}: $J=23$ clusters, $T=5$ periods,
overall censoring 33.7\%, wave-4 censoring 86.5\%, oracle ATE $= -0.0039$.
The exercise assesses which estimator's CI covers the known truth and
at what cost in bias and CI width; it does not constitute an inferential
claim about the EPT intervention.}
\label{tab:ept}
\small
\begin{tabular}{lcccc}
\toprule
Estimator & ATE & SE & 95\% CI & Oracle covered? \\
\midrule
Oracle ($\Psi_0$, known by construction) & $-0.0039$ & --- & --- & --- \\
GLMM (complete-case)        & $-0.0012$ & $0.0066$ & $(-0.014,\; 0.012)$ & Yes \\
IPCW                        & $-0.0059$ & $0.0163$ & $(-0.040,\; 0.028)$ & Yes \\
SA-AIPW (proposed)          & $-0.0008$ & $0.0082$ & $(-0.018,\; 0.016)$ & Yes \\
\bottomrule
\end{tabular}
\end{table}

\begin{figure}[!htbp]
\centering
\includegraphics[width=0.78\textwidth]{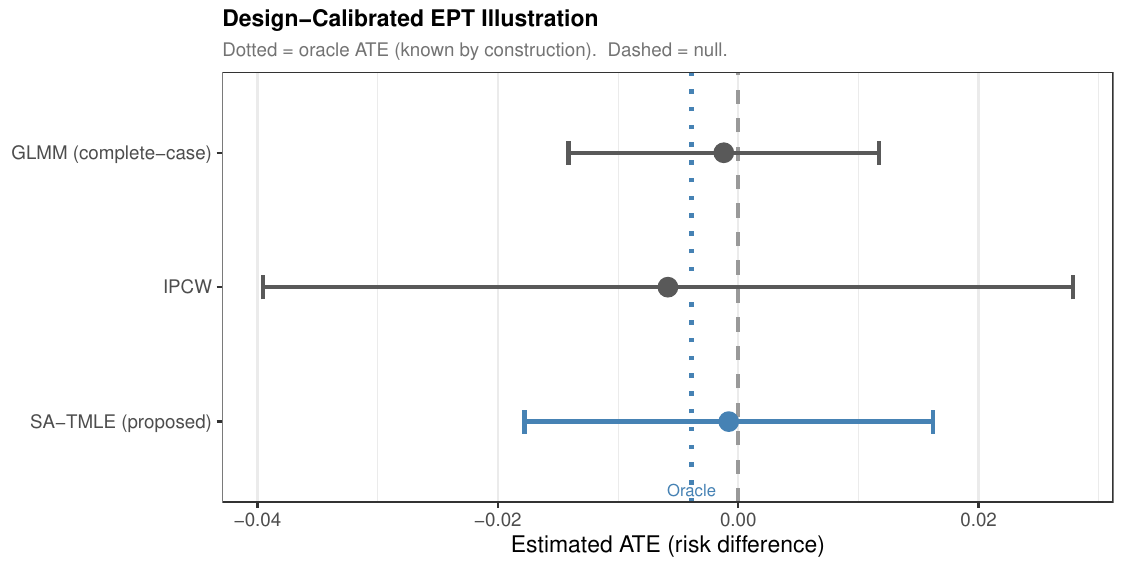}
\caption{Oracle comparison for the design-calibrated EPT illustration
(calibrated to \citealt{golden2015uptake}).
Horizontal bars are 95\% cluster-robust confidence intervals; the
vertical dashed line marks zero; the dotted line marks the known oracle
$\Psi_0 = -0.0039$.  All three estimators cover the oracle, with
point estimates within 0.003 of the truth.  The key comparison is
CI width: IPCW (width 0.068) is twice as wide as SA-AIPW (0.034),
reflecting variance inflation from near-zero wave-4 censoring weights.
GLMM achieves the narrowest CI (0.026) at the cost of model dependence.
This figure illustrates oracle coverage under a calibrated design,
not the causal effect of EPT.}
\label{fig:ept}
\end{figure}

\section{Discussion}
\label{sec:discussion}

We developed a Surrogate-Assisted AIPW estimator (SA-AIPW) for the
nested bridge functional
$\Psi(P_0)=\mathbb{E}_{W,t}[\mathbb{E}_{S|A=1}[\bar Q_Y(S,1,W,t)]
-\mathbb{E}_{S|A=0}[\bar Q_Y(S,0,W,t)]]$
under administrative censoring. The key theoretical contribution is the
identification of $R_{SY}$ (Proposition~\ref{prop:dml_remainder}):
a second-order remainder with no doubly-robust complement that makes a
density-plug-in DML one-step insufficient for this functional class.
The proposed estimator integrates over the empirical surrogate
distribution by treatment weighting and so structurally avoids $R_{SY}$
without estimating the conditional surrogate density $f_S$, recovering
$\sqrt{J}$-consistent asymptotic linearity under a product-rate
double-robustness condition. Simulations confirm near-zero bias and
markedly lower variance than IPCW across all censoring regimes
evaluated.

\paragraph{Finite-sample coverage.}
The sandwich interval for CV-TMLE undercovers at 0.87--0.89 across all
cluster counts. Table~\ref{tab:var_decomp} identifies the source: the
sandwich estimates $\mathrm{Var}(\EICj)/J$ but omits the variance of the
fluctuation-step parameters $(\varepsilon_Y, \varepsilon_S)$, which
contribute an additional $O(J^{-1})$ term of the same order as the
leading sandwich term under the present nuisance specification.
This is not a failure of the point estimator --- bias is essentially zero
at all $J$ --- but a variance-calibration gap attributable to the two-stage
targeting construction.

The leave-one-cluster-out jackknife, applied to the \emph{full} two-stage
estimator (re-running both fluctuation steps for each held-out cluster),
substantially reduces this gap by capturing fluctuation-step variability
omitted by the sandwich approximation. Coverage is 0.956 at $J=30$ and
0.960--0.968 at $J\geq50$ --- near nominal across the practically
important regime. At very small $J$ (10--20) the jackknife is mildly
conservative (0.970--0.980); this pattern is consistent with the finite-sample
conservatism often seen for leave-one-out procedures when $J$ is small and
represents a safe failure mode. When accurate finite-sample coverage is a
priority, we recommend the cluster jackknife for SA-TMLE; the sandwich
remains a useful first-order approximation in larger-$J$ settings.

\paragraph{Assumption diagnostics and sensitivity.}
Assumption~\ref{ass:mar} (Surrogate-Mediated MAR) is partially testable on
the observed sub-cohort via the augmented regression
$\mathbb{E}[Y\mid S,A,W,t,\hat r(A,W,t),\Delta=1]$,
where $\hat r=\hat P(\Delta=1\mid A,W,t)/\hat P(\Delta=1\mid S,A,W,t)$;
a zero coefficient on $\hat r$ is consistent with Assumption~\ref{ass:mar}.
When the assumption fails with violation
$\gamma(S,A,W,t)\equiv\mathbb{E}[Y\mid S,A,W,t,\Delta=0]
-\mathbb{E}[Y\mid S,A,W,t,\Delta=1]$,
the leading-order bias is
\begin{equation}
  \label{eq:bias_gamma}
  \mathrm{Bias}(\PsiAIPW)
  \;\approx\;
  \mathbb{E}\!\left[
    \tfrac{(1-\gD)}{\gD}
    \bigl(I(A=1)\gamma(S,1,W,t)-I(A=0)\gamma(S,0,W,t)\bigr)
  \right].
\end{equation}
The tipping-point $\gamma^*$ solving $\PsiAIPW=\widehat{\mathrm{Bias}}(\gamma^*)$
extends \citet{scharfstein1999adjusting} to the surrogate-mediated setting and
is analogous to the $E$-value of \citet{vanderweele2017sensitivity}; we
recommend reporting $\gamma^*$ routinely.

\paragraph{Overlap support and the full-horizon estimand.}
Throughout this paper, the target estimand is the overlap-support ATE
$\Psi^{\Tov}$ identified in Theorem~\ref{thm:id}, defined by averaging
over $t\in\Tov$ where $g_A(\cdot\mid t)$ is bounded away from zero
and one. In a complete stepped-wedge rollout this excludes the
baseline ($t=1$, no cluster yet treated) and final ($t=T$, all
clusters treated) periods, where pointwise treatment positivity
fails by design. Restricting the estimand to $\Tov$ delivers a
nonparametric identification theorem that does not rely on
extrapolation across calendar time and gives a clean
semiparametric efficiency theory. A natural extension is the
full-horizon ATE
$\Psi^{\mathrm{full}} = \mathbb{E}[Y(1)-Y(0)]$,
which marginalizes over all $t\in\{1,\ldots,T\}$. Because $E[Y(1)\mid W,t=1]$
and $E[Y(0)\mid W,t=T]$ are not nonparametrically identified under
randomization alone, $\Psi^{\mathrm{full}}$ is identified only under an
additional secular-trend smoothness assumption on
$E[Y^a\mid W,t]$ in $t$ --- consistent with the structure already implicit
in our Super Learner specification of $\hat{\bar Q}_Y$. Under such smoothness,
the SA-AIPW estimator extended to all $t\in\{1,\ldots,T\}$ retains
its asymptotic linearity, but the influence-function expansion now
contains a calendar-time extrapolation component that is bounded by
the smoothness modulus rather than by the product-rate condition of
Theorem~\ref{thm:asymp}. Practitioners reporting full-horizon
estimates from stepped-wedge trials should therefore treat the
boundary-period contributions as smoothness-assisted rather than
nonparametrically identified, and report sensitivity analyses to
the secular-trend specification.

\paragraph{Beyond stepped-wedge trials.}
The nested bridge functional~\eqref{eq:identification} and the SA-AIPW
are not tied to the SW-CRT design. The identification conditions
(Assumptions~\ref{ass:consistency}--\ref{ass:pos_trt}) and the
$R_{SY}$ obstruction arise whenever three features co-occur:
(i)~a delayed primary outcome $Y$ is administratively censored for
a subset of units; (ii)~a shorter-term surrogate $S$ is observed
broadly; and (iii)~censoring is conditionally independent of $Y$
given $S$ (surrogate-mediated MAR). We briefly sketch two additional
settings where these conditions hold.

\emph{Staggered-enrollment RCTs with administrative database linkage.}
Large pragmatic trials increasingly randomize individuals over a
multi-year enrollment window and ascertain long-term outcomes
(e.g., cardiovascular events, disability) through administrative claims
linkage at a fixed analysis date. Participants enrolled in the final
cohort have insufficient follow-up for the primary endpoint, generating
design-induced administrative censoring analogous to the SW-CRT
near-boundary regime. A short-term clinical measure collected at
a scheduled visit (e.g., 6-month biomarker panel, early functional
status) serves as the surrogate $S$. Because censoring is driven by
enrollment date relative to the database lock --- not by the
participant's health trajectory --- Assumption~\ref{ass:mar} is as
defensible as in the SW-CRT setting, the cluster index $j$ maps to
enrollment cohort or site, and the SA-AIPW applies without
modification. The $R_{SY}$ obstruction persists in this setting
because the target functional retains the nested integral structure:
the ATE is identified by integrating $\bar Q_Y(S,a,W)$ over the
treatment-arm-specific distribution of $S$, not by inverse-weighting
the censored observations.

\emph{Multi-site registry studies with delayed outcome ascertainment.}
Cancer registries and transplant databases routinely collect
short-term post-treatment response indicators (e.g., pathological
response, graft function at 3 months) for all patients, while
long-term survival or recurrence data accrue differentially across
sites depending on registry maturity and loss to follow-up. When
follow-up incompleteness is driven by administrative reporting lags
rather than informative dropout, the surrogate-mediated MAR condition
holds with sites as clusters. The nested bridge functional identifies
the treatment effect on the long-term outcome by integrating the
outcome regression over the conditional distribution of the early
response indicator, and the SA-AIPW provides doubly robust inference
without requiring stable inverse-censoring weights across sites with
heterogeneous follow-up completeness.

In both settings, the theoretical results of
Sections~\ref{sec:semiparam}--\ref{sec:asymptotics} apply directly:
the vanishing $\mathcal{T}_\Delta$ component
(Lemma~\ref{lem:tdelta_main}), the $R_{SY}$ obstruction
(Proposition~\ref{prop:dml_remainder}), the augmented
estimating-equation construction (Theorem~\ref{thm:asymp}), and the
finite-sample jackknife variance calibration
(Theorem~\ref{thm:jack}) all follow from the functional
structure, not from the specific SW-CRT data-generating process.

\paragraph{Scope and future work.}
Two open theoretical directions remain: (i)~extending the cluster-level
EIC to time-to-event outcomes with continuous administrative censoring,
where the surrogate-mediated MAR condition must be reformulated in
terms of a counting-process filtration; and (ii)~providing a closed-form
characterization of the finite-sample variance gap between the
cluster-robust sandwich and the leave-one-cluster-out jackknife in
Theorem~\ref{thm:jack} as a function of the DGP and nuisance-estimation
rates, which would allow analytic comparison of jackknife and sandwich
efficiency without simulation.
A third practical direction is establishing non-asymptotic coverage bounds
for the cross-fitted SA-AIPW when nuisance classes are unbounded, requiring bracketing
entropy bounds beyond those available for current Super Learner ensembles.

\section*{Supplementary Material}

The Supplementary Material contains:
Section~S1, the Berry--Esseen coverage bound (Theorem~S1 and Corollary~S1);
Section~S2, the derivation of the efficient influence curve including
the tangent space factorization, total EIC, cluster-robust variance estimator,
second-order remainder and double robustness proof, the DML cross-product
remainder (Proposition~1 proof), and jackknife variance consistency
(Theorem~\ref{thm:jack} proof);
Section~S3, the Super Learner specification and cluster-level cross-fitted SA-AIPW
implementation details.

All analyses were conducted in \texttt{R} (version 4.3 or later).
The cross-fitted SA-AIPW estimator, including the cluster-level
Super Learner wrapper, the augmentation step, the cluster-robust
sandwich variance estimator, and all four simulation scenario generators
(Blocks~I--III of Section~\ref{sec:sim}, and the Block~IV ICC simulations), are implemented in the open-source
\texttt{R} package \texttt{swcrtSurrAIPW}, available at
\url{https://github.com/[author]/swcrtSurrAIPW}.
The package depends on \texttt{SuperLearner} \citep{polley2010super} for
ensemble estimation and \texttt{lme4} for the GLMM comparator.
A fully reproducible script that replicates all tables and figures in
this paper is included in the package vignette and archived on Zenodo
at \url{https://doi.org/10.5281/zenodo.[XXXXX]} (DOI to be assigned upon
acceptance).

\bibliographystyle{plainnat}

\bibliography{references_2}

\end{document}


\maketitle

This supplement contains: (S1)~the Berry--Esseen coverage bound;
(S2)~derivation of the efficient influence curve; and
(S3)~Super Learner specification and cross-fitted SA-AIPW implementation.
All numbering is prefixed with ``S''.

\section{Berry--Esseen Coverage Bound}
\label{supp:be}

\begin{theorem}[Berry--Esseen Coverage Bound for the SA-AIPW Estimator]
\label{thm:be}
Suppose Assumptions~1--5 of the main paper hold and
conditions (C1)--(C2) of Theorem~2 are satisfied.
Define $B = M\bar{n}T$ as the almost-sure bound on $|\EICj|$ from (C2), and
$\kappa = B/\sigma$.  Suppose additionally:
\begin{enumerate}[label=\normalfont(C\arabic*),leftmargin=2em]
  \setcounter{enumi}{3}
  \item \emph{(Superlinear nuisance rate)} There exist $\gamma > 1/2$ and
    $c_{\mathcal{F}} < \infty$ such that
    $\bigl\|\hat\QY - \bar Q^0_Y\bigr\|_{P_0}
      \cdot
      \bigl\|\hat\gD - g^0_\Delta\bigr\|_{P_0}
      \le c_{\mathcal{F}}\,J^{-\gamma}$ a.s.
\end{enumerate}
Define $\lambda_{\mathrm{BE}} = C_{\mathrm{BE}}\rho_3/\sigma^3$,
$\lambda_t = \phi(z_{\alpha/2})z_{\alpha/2}(z_{\alpha/2}^2+1)/2$,
$\lambda_V = 4\kappa^2$, where $C_{\mathrm{BE}} \le 0.4748$
\citep{korolev2010sharpened}.  Then for all $J \ge 2$:
\begin{equation}
  \alpha - P\!\bigl(\Psi_0 \in \mathrm{CI}^t_{1-\alpha}(J)\bigr)
  \;\le\;
  \frac{\lambda_{\mathrm{BE}}}{\sqrt{J}}
  - \frac{\lambda_t}{J-1}
  + \frac{\lambda_V}{\sqrt{J-1}}
  + \frac{c_{\mathcal{F}}\sqrt{J}}{J^{\gamma}\sigma}.
\end{equation}
As $J \to \infty$, the two-sided coverage error is $O(\kappa/\sqrt{J})$.
\end{theorem}

\begin{corollary}[Minimum Clusters for Guaranteed Coverage]
\label{cor:Jmin}
Under $\gamma \ge 1$, the undercoverage is at most $\epsilon$ for
$J \ge J^* := \lceil(C_{\mathrm{BE}}\kappa + \lambda_V^{1/2})^2/\epsilon^2\rceil$.
For the simulation DGP ($\kappa \approx 9.5$): $J^* \approx 27$ for $\epsilon = 1\%$.
\end{corollary}

\medskip\noindent\textbf{Proof sketch.}
Combine: (i)~Berry--Esseen for standardised cluster scores;
(ii)~Slutsky correction for sandwich $\hat\sigma_J$;
(iii)~$t$-distribution anti-concentration; and
(iv)~triangle inequality for remainder $R_2$.

\section{Derivation of the Efficient Influence Curve}
\label{app:eic}

\paragraph{Notation.}
Write $O = O_{ijt}$ for a generic observation.  For $a\in\{0,1\}$ define
$\QY(S,a,W,t) \equiv \EE[Y\mid S,A=a,W,t,\Delta=1]$,
$\Qint(a,W,t) \equiv \EE[\QY(S,a,W,t)\mid A=a,W,t]$, $\gA(a\mid t)
\equiv P(A=a\mid W,t)$, and $\gD(1\mid S,A,W,t) \equiv P(\Delta=1\mid S,A,W,t)$.
Let $\Psi_a \equiv \EE[\Qint(a,W,t)]$.

\subsection{Tangent Space Factorization and the Pathwise Derivative}
\label{app:eic:tangent}

The non-parametric model for the observed-data distribution $P$ has tangent
space that factorizes according to the DAG time-ordering
\citep{bickel1993efficient,robins1995semiparametric}:
\begin{equation}
  \calT(P)
  \;=\;
  \calT_{W,t}
  \oplus \calT_A
  \oplus \calT_S
  \oplus \calT_\Delta
  \oplus \calT_Y.
\end{equation}
The efficient influence curve $D^*_a(O)$ for the functional $\Psi_a$ is
characterized by the pathwise derivative equation
$\frac{d}{d\epsilon}\Psi_a(P_\epsilon)|_{\epsilon=0}
= \EE_P[D^*_a(O)\,h(O)]$ for all $h\in\calT(P)$.
The following proposition computes each tangent-space component.

\begin{proposition}[Individual-Level EIC Components]
\label{prop:eic_components}
Under Assumptions~1--5 of the main paper, the pathwise
derivative of $\Psi_a$ decomposes as
\begin{equation}
  D^*_a(O)
  \;=\;
  \underbrace{\frac{I(A=a)\,\Delta}{\gA(a\mid t)\,\gD(1\mid S,A,W,t)}
    \bigl(Y - \QY(S,a,W,t)\bigr)}_{D_{Y,a}(O)}
  \;+\;
  \underbrace{\frac{I(A=a)}{\gA(a\mid t)}
    \bigl(\QY(S,a,W,t) - \Qint(a,W,t)\bigr)}_{D_{S,a}(O)}
  \;+\;
  \underbrace{\Qint(a,W,t) - \Psi_a}_{D_{W,a}(O)}.
\end{equation}
\end{proposition}

\begin{proof}
We compute the pathwise derivative on each sub-tangent space in turn.

\smallskip\noindent
\textbf{(a) $\calT_Y$-variation.}
Perturb only the conditional distribution of $Y\mid S,A,W,t,\Delta=1$ along
score $h\in\calT_Y$.  The identification formula~(main paper, eq.~8)
depends on this factor through $\QY$, giving
\begin{equation}
  \frac{d}{d\epsilon}\bar{Q}_{Y,\epsilon}\big|_{\epsilon=0}
  = \EE_P\!\left[
    \frac{I(A=a)\,\Delta}{\gA(a\mid t)\,\gD(1\mid S,A,W,t)}
    \bigl(Y-\QY(S,a,W,t)\bigr)\,h(O)
  \right],
\end{equation}
identifying the $\calT_Y$ component as $D_{Y,a}(O)$.

\smallskip\noindent
\textbf{(b) $\calT_S$-variation.}
Perturb the conditional distribution $P(S\mid a,W,t)$ along score $h\in\calT_S$.
The formula integrates $\QY$ against this distribution, so
\begin{equation}
  \frac{d}{d\epsilon}\!\int\!\QY\,dP_\epsilon(s\mid a,W,t)\Big|_{\epsilon=0}
  = \EE_P\!\left[
    \frac{I(A=a)}{\gA(a\mid t)}
    \bigl(\QY(S,a,W,t) - \Qint(a,W,t)\bigr)\,h(O)
  \right],
\end{equation}
identifying $D_{S,a}(O)$.  Both $\QY$ and $\Qint$ are evaluated at the
fixed intervention value $a$; under $I(A=a)$ these agree with the observed-data
values on the support of the data, but the fixed-argument form makes explicit
that these are counterfactual predictions consistent with the identification
formula.

\smallskip\noindent
\textbf{(c) $\calT_{W,t}$-variation.}
Perturbing the marginal distribution of $(W,t)$ yields
\begin{equation}
  \frac{d}{d\epsilon}\!\int\!\Qint(a,w,t)\,dP_\epsilon(w,t)\Big|_{\epsilon=0}
  = \EE_P\!\bigl[\bigl(\Qint(a,W,t)-\Psi_a\bigr)\,h_{W,t}(O)\bigr],
\end{equation}
identifying $D_{W,a}(O) = \Qint(a,W,t)-\Psi_a$.  The centering by $\Psi_a$
ensures $\EE[D^*_a]=0$.  \qed
\end{proof}

\begin{lemma}[Vanishing $\calT_\Delta$ Component]
\label{lem:tdelta}
Under Assumption~3 (main paper) (Surrogate-Mediated MAR), the pathwise
derivative of $\Psi_a$ in the $\calT_\Delta$ direction is identically zero:
the EIC has no censoring-mechanism component.
\end{lemma}

\begin{proof}
Consider a parametric submodel perturbing only $g_\Delta$ along score
$h_\Delta\in\calT_\Delta$.  The identification formula~(main paper, eq.~8)
reaches $g_\Delta$ only through the restriction of $\EE[Y\mid S,A{=}a,W,t]$ to
$\{\Delta=1\}$.  By Assumption~3 (main paper),
$P(\Delta=1\mid Y,S,A,W,t) = P(\Delta=1\mid S,A,W,t)$,
so perturbing $g_\Delta$ while holding $\QY(S,a,W,t)$ and $P(S\mid a,W,t)$
fixed leaves $\Qint(a,W,t)$ unchanged.  Consequently,
\begin{equation}
  \frac{d}{d\epsilon}\Psi_a(P_\epsilon)\big|_{\epsilon=0}
  = \EE_P\!\bigl[\bigl(\Qint(a,W,t)-\Psi_a\bigr)\,h_\Delta(O)\bigr]
  = 0,
\end{equation}
where the final equality holds because $\Qint(a,W,t)-\Psi_a$ is $P$-mean-zero
and orthogonal to any score supported only on variations in the censoring
mechanism.  Without Assumption~3 (main paper), the censoring direction
contributes a non-zero term involving the unidentified quantity
$\EE[Y\mid S,A,W,t,\Delta=0]$, rendering the EIC non-estimable from observed
data.  \qed
\end{proof}

\subsection{Total Individual-Level and Cluster-Level EIC}
\label{app:eic:cluster}

Collecting Proposition~\ref{prop:eic_components} for $a=1$ minus $a=0$:
\begin{equation}
  \label{eq:Dstar_app}
  D^*(\Oijt)
  \;=\;
  \bigl(D_{Y,1}+D_{S,1}+D_{W,1}\bigr)
  -
  \bigl(D_{Y,0}+D_{S,0}+D_{W,0}\bigr).
\end{equation}
Because clusters are the i.i.d.\ units of randomization, the
parameter $\Psi$ is a functional of the cluster-level distribution
$P_j$ of $O_j = \{O_{ijt}: i=1,\ldots,n_j,\,t=1,\ldots,T\}$:
\begin{equation}
  \Psi
  = \frac{1}{N}\sum_{j=1}^J\sum_{i=1}^{n_j}
    \bigl[\Qint(1,W_{ij},t_{ij}) - \Qint(0,W_{ij},t_{ij})\bigr].
\end{equation}
Applying the pathwise derivative to a submodel $P_{j,\epsilon}$ perturbing
only cluster $j$ along score $h_j(O_j)$, the chain rule over the additive
structure gives
\begin{equation}
  \frac{d}{d\epsilon}\Psi(P_{j,\epsilon})\big|_{\epsilon=0}
  = \EE_{P_j}\!\!\left[
    \left(\sum_{i=1}^{n_j}\sum_{t=1}^T D^*(\Oijt)\right)h_j(O_j)
  \right].
\end{equation}
The Riesz representer in $L^2(P_j)$ is therefore the \emph{sum} (not average)
of individual EICs:
\begin{equation}
  \label{eq:EICj_app}
  \EICj
  \;=\;
  \sum_{i=1}^{n_j}\sum_{t=1}^{T} D^*(\Oijt).
\end{equation}
Summation absorbs both the intra-cluster correlation (ICC) across individuals at
a given step and the serial correlation across steps induced by the secular trend;
averaging by $n_j$ would rescale the influence curve and produce inconsistent
variance estimates under unbalanced cluster sizes.

\subsection{Cluster-Robust Variance Estimator}
\label{app:eic:variance}

The cluster-robust sandwich variance estimator is
\begin{equation}
  \label{eq:var_app}
  \widehat{\mathrm{Var}}\!\left(\PsiAIPW\right)
  \;=\;
  \frac{1}{J(J-1)}\sum_{j=1}^{J}
  \left(\EICj - \overline{\mathrm{EIC}}\right)^2,
\end{equation}
with $\overline{\mathrm{EIC}} = J^{-1}\sum_j\EICj$.
Its validity rests on cluster independence, which holds by the randomized design.

\subsection{Second-Order Remainder, Double Robustness, and Proof of Theorem~2 (main paper)}
\label{app:eic:r2}

\paragraph{Von Mises expansion.}
The von Mises expansion of $\PsiAIPW$ around $P_0$ yields:
\begin{equation}
  \label{eq:vM}
  \PsiAIPW - \Psi(P_0)
  \;=\;
  \frac{1}{J}\sum_{j=1}^{J}\EICj
  \;+\; R_2(\hat P, P_0),
\end{equation}
where the second-order remainder is
\begin{equation}
  \label{eq:R2}
  R_2(\hat P,P_0)
  \;=\;
  \int\!\bigl(\hat\QY - \bar Q^0_Y\bigr)
        \!\left(\frac{\hat\gD}{g^0_\Delta} - 1\right)dP_0
  \;+\;
  \int\!\bigl(\hat\Qint - \Qint^0\bigr)
        \!\left(\frac{\hat\gA}{g^0_A} - 1\right)dP_0.
\end{equation}
Equation~\eqref{eq:R2} is bilinear in the nuisance estimation errors.
The SA-AIPW construction (main paper, Section~4.2) places the empirical mean
of the cluster-level EIC at zero by construction
($J^{-1}\sum_j \EICj = 0$ identically), so the first-order bias term
vanishes and the residual is $R_2$.

\paragraph{Double robustness.}
If either the outcome models $(\QY,\Qint)$ or the propensity mechanisms
$(g_A,g_\Delta)$ are consistently estimated (at any rate), then by the
Cauchy--Schwarz inequality one factor in each product in~\eqref{eq:R2} is
$o_P(1)$, so $R_2=o_P(J^{-1/2})$ under condition~(C3).  Since $g_A$ is
known by design (Remark~3 (main paper)), (C3) reduces to requiring
$\|\hat\QY-\bar Q^0_Y\|_{P_0}\cdot\|\hat\gD-g^0_\Delta\|_{P_0}
=o_P(J^{-1/2})$, achievable at individual rates $o_P(J^{-1/4})$ for each.

\paragraph{Proof of Theorem~2 (main paper).}
Multiply~\eqref{eq:vM} by $\sqrt{J}$:
\begin{equation}
  \sqrt{J}\,\bigl(\PsiAIPW - \Psi(P_0)\bigr)
  \;=\;
  \frac{1}{\sqrt{J}}\sum_{j=1}^{J}\EICj
  \;+\; \sqrt{J}\,R_2(\hat P,P_0).
\end{equation}
Under condition~(C3), $\sqrt{J}\,R_2=o_P(1)$ (Cauchy--Schwarz applied to the
bilinear form~\eqref{eq:R2}).  It therefore suffices, by Slutsky's theorem,
to establish $J^{-1/2}\sum_j\EICj\to_d\mathcal{N}(0,\sigma^2)$.

By condition~(C1), the cluster scores $\{\EICj\}_{j=1}^J$ are independent.
Each $\EICj$ has mean zero under $P_0$, a direct consequence of the
augmented estimating equation construction
\citep[cf.][Chapter~2]{vdl2011targeted}.  Set $\sigma^2=\mathrm{Var}_{P_0}(\EICj)$.
Under condition~(C2), $Y$, $S$, and all nuisance functions are uniformly
bounded, so $|\EICj|\le M\bar nT$ almost surely for a finite constant $M$.
For any $\varepsilon>0$, the Lindeberg sum
\begin{equation}
  \frac{1}{J\sigma^2}\sum_{j=1}^J
  \EE\!\bigl[\EICj^2\,\mathbf{1}_{|\EICj|>\varepsilon\sigma\sqrt{J}}\bigr]
  = 0
  \quad\text{for all }J>(M\bar nT)^2/(\varepsilon^2\sigma^2),
\end{equation}
so Lindeberg's condition holds \citep[Definition~2.24]{vaart1998asymptotic}
and the Lindeberg--Feller CLT
\citep[Theorem~2.27]{vaart1998asymptotic} gives
$J^{-1/2}\sum_j\EICj\to_d\mathcal{N}(0,\sigma^2)$.
Consistency of the sandwich estimator~\eqref{eq:var_app} for $\sigma^2/J$
follows from the weak law of large numbers applied to the i.i.d.\ cluster
scores $\{(\EICj)^2\}$ and $\{\EICj\}$, via the continuous mapping theorem
\citep[Theorem~2.3]{vaart1998asymptotic}.  This completes the proof.  \qed

\paragraph{Finite-sample recommendation.}
When $J\le30$, we recommend using the $t_{J-1}$ confidence interval
$\mathrm{CI}^t_{1-\alpha}(J)$ defined in~(main paper, eq.~22) in place of the
Gaussian interval~(main paper, eq.~21); the simulation confirms coverage recovery at
$J=10$.  A formal non-asymptotic justification for this recommendation,
including an explicit Berry--Esseen bound on the coverage error and a minimum-$J$
formula, is given in Theorem~S1 and Corollary~S1
of Section~S1, with the complete proof in
Supplement~\ref{app:eic:be}.

\subsection{Proof of Theorem~S1 and Corollary~S1}
\label{app:eic:be}

The argument has four steps, each relying on a standard inequality.
\emph{Step~1} (Berry--Esseen): the i.i.d.\ bounded cluster scores satisfy the
Korolev--Shevtsova Berry--Esseen bound \citep{korolev2010sharpened}
with constant $C_{\mathrm{BE}} \le 0.4748$.
Specifically, with $\kappa = \EE[|\EICj|^3]/\sigma^3$ denoting the
normalised third absolute moment,
\begin{equation}
  \sup_{x}\bigl|P(\sqrt{J}\bar D_J/\sigma \le x) - \Phi(x)\bigr|
  \;\le\; C_{\mathrm{BE}}\,\kappa/\sqrt{J}.
\end{equation}
\emph{Step~2} (Cornish--Fisher): the first-order Cornish--Fisher expansion
\citep[Chapter~6]{hall1992bootstrap} gives
$t_{J-1,\alpha/2} = z_{\alpha/2} + z_{\alpha/2}(z_{\alpha/2}^2+1)/(4(J-1))
+ O((J-1)^{-2})$,
establishing a coverage advantage $\ge \lambda_t/(J-1)$ of the $t_{J-1}$ CI
over the $z$ CI, where $\lambda_t = z_{\alpha/2}(z_{\alpha/2}^2+1)/4$.
\emph{Step~3} (variance concentration): Bernstein's inequality applied to the
i.i.d.\ squared cluster scores $\{(\EICj)^2\}$ gives
\begin{equation}
  P\!\bigl(|\hat\sigma^2_{\mathrm{sand}} - \sigma^2| > t\bigr)
  \;\le\; 2\exp\!\Bigl(-\frac{Jt^2/2}
  {\mathrm{Var}((\EICj)^2) + Mt/3}\Bigr),
\end{equation}
so $\hat\sigma^2_{\mathrm{sand}}$ concentrates around $\sigma^2$ at rate
$O(\kappa^2/\sqrt{J-1})$ with probability $\ge 1 - 2\exp(-cJ)$.
\emph{Step~4} (assembly): the triangle inequality on the von Mises decomposition
$\PsiAIPW - \Psi_0 = \bar D_J + R_2$, combined with
Steps~1--3 and remainder control~(C3), assembles the four-term
bound~(S3).
Corollary~S1 follows by setting the dominant
$O(\kappa/\sqrt{J})$ term to $\epsilon$ and solving for~$J$.

\subsection{Proof of Proposition~1 (main paper): Nested Cross-Product in
  the DML Remainder}
\label{app:eic:dml}

We derive the nested cross-product term $R_{SY}$ and show that the SA-AIPW
construction avoids it by integrating against the empirical surrogate
distribution rather than estimating $f_S$, providing the formal
justification for Proposition~1 (main paper).

\paragraph{Setup.}
Write the surrogate-bridge functional for $a=1$ (the $a=0$ term is
symmetric) as a bilinear form in two features of $P$:
\begin{equation}
  \label{eq:psi_bilinear}
  \Psi_1(P)
  \;=\;
  \int\!\int
  \underbrace{\bar Q^P_Y(s,1,w,t)}_{\text{feature 1: outcome model}}
  \cdot
  \underbrace{f^P_S(s\mid 1,w,t)}_{\text{feature 2: surrogate density}}
  \;d\mu(s)\,dP_0(w,t),
\end{equation}
where we hold the marginal of $(W,t)$ at $P_0$ for the purpose of
computing the second-order expansion (the marginal term contributes
only the $D_{W,a}$ component, which is standard).

\paragraph{Von Mises expansion of the plug-in.}
Let $\hat\QY$ and $\hat f_S$ be any estimators, and set
$\hat{\bar Q}^{\mathrm{DML}}_{\mathrm{int}}(1,w,t)
=\int\hat\QY(s,1,w,t)\hat f_S(s\mid 1,w,t)d\mu(s)$.
The second-order von Mises expansion of the plug-in $\Psi_1(\hat P)$
around $P_0$ decomposes as:
\begin{equation}
  \Psi_1(\hat P) - \Psi_1(P_0)
  \;=\;
  \underbrace{-P_0 D^*_{1,\text{partial}}(\cdot;\hat\eta)}_{\text{first-order bias}}
  \;+\;
  \underbrace{R^{QY\gD}_2}_{\text{standard DR pair}}
  \;+\;
  \underbrace{R^{Qint\gA}_2}_{\text{known-}g_A\text{ pair}}
  \;+\;
  \underbrace{R_{SY}}_{\text{nested cross-product}},
\end{equation}
where the individual second-order terms are:
\begin{align}
  R^{QY\gD}_2
  &= \int\!\bigl(\hat\QY - \bar Q^0_Y\bigr)
         \!\left(\tfrac{\hat\gD}{g^0_\Delta}-1\right)dP_0,
  \label{eq:R2a}\\
  R^{Qint\gA}_2
  &= \int\!\bigl(\hat{\bar Q}^{\mathrm{DML}}_{\mathrm{int}}-\bar Q^0_{\mathrm{int}}\bigr)
         \!\left(\tfrac{\hat\gA}{g^0_A}-1\right)dP_0
   \;=\; 0 \quad\text{(since $g_A$ is known)},
  \label{eq:R2b}\\
  R_{SY}
  &= \iint
      \bigl(\hat\QY - \bar Q^0_Y\bigr)(s,1,w,t)\,
      \bigl(\hat f_S - f^0_S\bigr)(s\mid 1,w,t)\;
      d\mu(s)\,dP_0(w,t).
  \label{eq:R2c}
\end{align}

\paragraph{Origin of $R_{SY}$.}
The term $R_{SY}$ arises from the bilinear structure of~\eqref{eq:psi_bilinear}.
To see this, write
$\hat{\bar Q}^{\mathrm{DML}}_{\mathrm{int}} - \bar Q^0_{\mathrm{int}}$
by decomposing around $(\bar Q^0_Y, f^0_S)$:
\begin{equation}
  \hat{\bar Q}^{\mathrm{DML}}_{\mathrm{int}}(1,w,t) - \bar Q^0_{\mathrm{int}}(1,w,t)
  \;=\;
  \underbrace{\int(\hat\QY - \bar Q^0_Y)\,f^0_S\,d\mu}_{\text{first-order in }\hat\QY}
  \;+\;
  \underbrace{\int\bar Q^0_Y\,(\hat f_S - f^0_S)\,d\mu}_{\text{first-order in }\hat f_S}
  \;+\;
  \underbrace{\int(\hat\QY - \bar Q^0_Y)(\hat f_S - f^0_S)\,d\mu}_{= R_{SY}(w,t)}.
\end{equation}
The first two terms are first-order and are corrected by the EIC:
the $D_S$ component of the one-step correction accounts for the first-order
bias in $\Qint$ due to either $\bar Q^0_Y$ or $f^0_S$ being estimated.
The third term is second-order in the \emph{joint} estimation error of both
$\bar Q_Y$ and $f_S$ simultaneously; it is not covered by the one-step EIC
correction, nor by the standard doubly-robust cancellation.

\paragraph{Why $R_{SY}$ has no doubly-robust complement.}
Both $R^{QY\gD}_2$ and $R^{Qint\gA}_2$ each involve a product of an
\emph{outcome-model} error and a \emph{propensity-model} error.  Consistent
estimation of either factor sets that product to $o_P(J^{-1/2})$, giving the
doubly-robust guarantee.  $R_{SY}$, by contrast, involves the product of
an outcome-model error $(\hat\QY - \bar Q^0_Y)$ with a surrogate-density
error $(\hat f_S - f^0_S)$.  There is no ``propensity complement'' to
$f_S$ in the efficient score --- the EIC contains no term that, when
consistently estimated, forces $R_{SY}=0$.  Hence $R_{SY}=o_P(J^{-1/2})$
requires the standalone product-rate condition~(main paper, eq.~24).
DML cross-fitting does not help: $R_{SY}$ is the $P_0$-expectation of a
second-order product, not a fluctuation around that expectation, so
independence of the training and validation folds leaves $R_{SY}$ unchanged.

\paragraph{Avoidance of $R_{SY}$ by the SA-AIPW construction.}
Under the SA-AIPW construction (Section~4.2 (main paper)), the
intermediate regression is defined directly as the empirical average of
the surrogate-assisted pseudo-outcomes $\widetilde Y_{ijt}^{(a)}$. This
implies the score-equation identity
\begin{equation}
  \label{eq:qint_star}
  \hat\Qint(a,w,t)
  \;=\;
  \frac{\displaystyle\sum_{j,i,t}
    \tfrac{I(A_{jt}=a)}{\hat\gA}\,\widetilde Y_{ijt}^{(a)}\,
    \delta_{(w,t)=(W_{ijt},t_{ij})}}
  {\displaystyle\sum_{j,i,t}
    \tfrac{I(A_{jt}=a)}{\hat\gA}\,
    \delta_{(w,t)=(W_{ijt},t_{ij})}},
\end{equation}
which is a propensity-weighted empirical conditional mean of the
surrogate-assisted pseudo-outcome over observed surrogates $S_{ijt}$.
This construction never evaluates $f_S$ --- it marginalizes over $S$
using the \emph{empirical} distribution of the surrogate among treated
(or control) subjects, implicitly. The same identity is achieved by
construction in SA-AIPW, in contrast to a TMLE construction in which it
would arise as the converged fixed point of an iterative fluctuation.

As a result, the SA-AIPW $\hat\Qint$ is consistent with $\hat\QY$ by
construction, and the von Mises remainder for the SA-AIPW contains only
the standard bilinear term~\eqref{eq:R2a} (with $R^{Q_{\rm int}g_A}_2=0$
because $g_A$ is known by design):
\begin{equation}
  R_2 \;=\; R^{QY\gD}_2,
\end{equation}
and $R_{SY}$ does not appear because $\hat\Qint$ was never constructed as
an integral of $\hat\QY$ against an independently estimated $\hat f_S$.
The SA-AIPW thus satisfies Theorem~2 (main paper) under the
product-rate double-robustness condition --- $\|\hat\QY - \bar Q^0_Y\|_{P_0}\cdot
\|\hat\gD - g^0_\Delta\|_{P_0}=o_P(J^{-1/2})$ --- whereas the DML
one-step requires this condition \emph{plus}~(main paper, eq.~24).
This completes the proof of Proposition~1 (main paper).  \qed

\subsection{Proof of Theorem~3 (main paper): Jackknife Variance Consistency}
\label{app:eic:jack}

We prove both parts of Theorem~3: that the sandwich is consistent for
$\sigma^2_\infty/J$ only, and the jackknife is consistent for the total
variance $\sigma^2_J/J = (\sigma^2_\infty + V_\varepsilon)/J$.

\paragraph{Part (a): Sandwich targets $\sigma^2_\infty$ only.}
The sandwich estimator~(main paper, eq.~18) is
$\hat V_{\mathrm{sand}} = [J(J-1)]^{-1}\sum_j(\EICj - \bar{\mathrm{EIC}})^2$.
The plug-in EIC scores $\EICj$ are computed at the \emph{fixed} nuisance estimates
$\hat\eta = (\hat\QY, \hat\Qint, \hat\gA, \hat\gD)$ from the full sample.
By the law of large numbers for i.i.d.\ cluster scores,
$J \cdot \hat V_{\mathrm{sand}} \to_P \mathrm{Var}_{P_0}(\EICj(\hat\eta))
\to \mathrm{Var}_{P_0}(\EICj(\eta_0)) = \sigma^2_\infty$
as $\hat\eta \to_P \eta_0$.
The plug-in EIC does not re-estimate the fluctuation parameters
$(\varepsilon_Y, \varepsilon_S)$ for each cluster, so their sampling
variability is not captured.  This is the source of the gap.

\paragraph{Part (b): Jackknife targets $\sigma^2_J$.}
The leave-one-cluster-out estimate is $\PsiAIPW^{(-j)}$, obtained by
refitting \emph{both} fluctuation steps on $J-1$ clusters.  Write the
perturbation:
\begin{equation}
  \label{eq:jack_perturb}
  J(\PsiAIPW^{(-j)} - \PsiAIPW)
  \;=\;
  -\EICj \;-\; J\bigl(R_2^{(-j)} - R_2\bigr)
  \;-\; J\bigl(\varepsilon^{(-j)} - \varepsilon\bigr) \cdot h_j
  \;+\; o_P(1),
\end{equation}
where $\varepsilon^{(-j)}$ denotes the fluctuation parameters refit
without cluster~$j$, and $h_j$ is the mean clever covariate in
cluster~$j$.

Under condition~(C5) (jackknife stability),
$\|\hat\eta^{(-j)} - \hat\eta\|_{P_0} = O_P(J^{-1})$ and the
fluctuation map $\eta \mapsto \varepsilon(\eta)$ is Lipschitz, so
$|\varepsilon^{(-j)} - \varepsilon| = O_P(J^{-1})$ and
$J|\varepsilon^{(-j)} - \varepsilon| = O_P(1)$.  The third term in
\eqref{eq:jack_perturb} is therefore $O_P(1)$ --- the same order as
$\EICj$ --- and contributes to the jackknife variance.

The jackknife variance estimator is:
\begin{equation}
  \hat\sigma^2_{\mathrm{jack}}
  \;=\;
  \frac{J-1}{J}\sum_{j=1}^J
  (\PsiAIPW^{(-j)} - \bar\Psi^{(\cdot)})^2.
\end{equation}
By the expansion~\eqref{eq:jack_perturb}, the summands
$(\PsiAIPW^{(-j)} - \bar\Psi^{(\cdot)})^2$ capture the variance of
$\EICj$ \emph{plus} the variance of the fluctuation perturbation
$J(\varepsilon^{(-j)} - \varepsilon)\cdot h_j$.  By the law of
large numbers applied to these i.i.d.\ (across $j$) perturbations:
\begin{equation}
  J \cdot \hat\sigma^2_{\mathrm{jack}}
  \;\to_P\;
  \mathrm{Var}(\EICj) + \mathrm{Var}\bigl(J(\varepsilon^{(-j)} - \varepsilon)\cdot h_j\bigr)
  + 2\,\mathrm{Cov}(\ldots) + o(1)
  \;=\;
  \sigma^2_\infty + V_\varepsilon + o(1)
  \;=\;
  \sigma^2_J + o(1),
\end{equation}
where $V_\varepsilon \ge 0$ collects the fluctuation-parameter and
remainder-perturbation variance contributions.

When $V_\varepsilon > 0$ --- which holds whenever the fluctuation
parameters have non-degenerate sampling variability, as is generic
for the two-stage SA-TMLE --- the sandwich strictly underestimates
$\sigma^2_J$ and the jackknife is consistent for it.  The
$t_{J-1}$-based jackknife Wald interval achieves asymptotically
exact coverage by the Lindeberg--Feller CLT applied to the
leave-one-out pseudo-values.  \qed

\section{Super Learner Specifications and Cluster-Level Cross-Fitted SA-AIPW}
\label{app:sl}

\subsection{Motivation for Cluster-Level Cross-Validation}
\label{app:sl:motivation}

Standard individual-level cross-validation causes information leakage in
SW-CRTs: because all $n_j$ individuals in cluster $j$ share treatment
assignment, secular time, and a common random effect, fold-splitting at
the individual level places the same cluster's observations in both the
training and validation sets, producing optimistically biased ensemble
weights and anti-conservative confidence intervals.  The solution is
to assign entire clusters to folds, ensuring that each nuisance model
is trained exclusively on clusters absent from its validation fold.

\label{app:sl:cvtmle}

The cross-fitted SA-AIPW framework, building on the cross-fitted DML
framework of \citet{zheng2011cross} for i.i.d.\ data and on the
cluster-level adaptation of \citet{balzer2023two}, extends naturally to
the clustered setting by replacing the individual as the unit of
cross-validation with the cluster.

\subsubsection{Cluster-Level $V$-Fold Partition}
\label{app:sl:partition}

Let the $J$ clusters be randomly and exhaustively partitioned into $V$ mutually
exclusive folds of approximately equal size:
\begin{equation}
  \{F_1, F_2, \ldots, F_V\}
  \quad\text{s.t.}\quad
  \bigcup_{v=1}^{V} F_v = \{1,\ldots,J\},
  \quad
  F_v \cap F_{v'} = \emptyset \;\;\forall\, v \ne v'.
\end{equation}
For each fold $v$, let $\mathcal{T}(v) = \{1,\ldots,J\}\setminus F_v$ denote the
training set of clusters and $F_v$ the validation set.  We use $V=10$ folds
throughout, which achieves a favorable bias-variance tradeoff for the cross-validated
risk estimator across a wide range of cluster counts ($J = 10$--$100$).  In settings
with very small $J$ (e.g., $J<20$), leave-one-cluster-out cross-validation
($V = J$) is recommended to maximize training data in each fold.

\subsubsection{Fold-Specific Nuisance Estimation}
\label{app:sl:nuisance}

For each fold $v = 1,\ldots,V$, we train a complete set of nuisance estimators on the
training clusters $\mathcal{T}(v)$ only:
\begin{align}
  \QY^{(v)}(S,A,W,t)
    &= \text{SuperLearner fit of } \mathbb{E}[Y\mid S,A,W,t,\Delta=1]
       \;\text{ on } \mathcal{T}(v), \\
  \gD^{(v)}(1\mid S,A,W,t)
    &= \text{SuperLearner fit of } P(\Delta=1\mid S,A,W,t)
       \;\text{ on } \mathcal{T}(v).
\end{align}
The treatment propensity $\gA(1\mid t) = (t-1)/(T-1)$ is known by the
randomization schedule and is plugged in directly without estimation;
it depends only on calendar time $t$ and not on baseline covariates $W$.
Out-of-fold predictions for validation cluster $j\in F_v$ are generated by applying
the fold-$v$ estimators to the covariate values of individuals in cluster $j$.  We
denote these out-of-fold predictions as $\hat{m}^{(-j)}$ and
$\hat{g}_\Delta^{(-j)}$ to indicate they are generated from a model that
excluded cluster $j$ from training.  The intermediate regression
$\bar Q_{\mathrm{int}}$ is \emph{not} estimated as a separate Super
Learner under SA-AIPW: it is defined directly through the augmented
pseudo-outcome averaging of Section~\ref{app:sl:fluctuation} below.

\subsubsection{Augmented Pseudo-Outcome Construction on Out-of-Fold Predictions}
\label{app:sl:fluctuation}

The cross-fitted SA-AIPW differs from a non-cross-fitted SA-AIPW in a single
but critical respect: the augmentation step is performed \textbf{using
out-of-fold nuisance predictions}, not predictions from a model trained
on the full data. For each validation cluster $j \in F_v$ and each
treatment level $a \in \{0,1\}$, we construct the surrogate-assisted
pseudo-outcome using the fold-$v$ nuisance estimates:
\begin{equation}
  \widetilde{Y}_{ijt}^{(-j,a)}
  \;=\;
  \hat m^{(-j)}(S_{ijt}, a, W_{ijt}, t)
  \;+\;
  \frac{\Delta_{ijt}}{\hat g_\Delta^{(-j)}(S_{ijt}, a, W_{ijt}, t)}
  \bigl(Y_{ijt} - \hat m^{(-j)}(S_{ijt}, a, W_{ijt}, t)\bigr).
  \label{eq:tildeY_cv}
\end{equation}
Cross-fitting ensures that, for each $\widetilde Y_{ijt}^{(-j,a)}$, the
nuisance estimators evaluated on observation $(i,j,t)$ were trained on
clusters not containing $j$. This sample-splitting eliminates the
empirical-process term that would otherwise require a Donsker condition
on the nuisance class \citep{zheng2011cross}.

\subsubsection{Point Estimate and Cross-Fitted SA-AIPW Variance}
\label{app:sl:estimate}

The cross-fitted SA-AIPW point estimate pools the augmented pseudo-outcomes
across all validation folds. For each treatment level $a \in \{0,1\}$,
the cross-fitted treatment-arm estimator is
\begin{equation}
  \hat\Psi_a^{\mathrm{CV}}
  \;=\;
  \frac{1}{N}
  \sum_{v=1}^{V}\sum_{j\in F_v}\sum_{i=1}^{n_j}\sum_{t=1}^{T}
  \frac{I(A_{jt}=a)}{\hat g_A^{(v)}(a\mid t)}\,
  \widetilde Y_{ijt}^{(-j,a)},
\end{equation}
and the cross-fitted SA-AIPW estimator of the ATE is
\begin{equation}
  \hat\Psi_{\mathrm{AIPW}}^{\mathrm{CV}}
  \;=\;
  \hat\Psi_1^{\mathrm{CV}} - \hat\Psi_0^{\mathrm{CV}}.
\end{equation}
The cluster-level EIC is computed using each cluster's own out-of-fold predictions,
following the aggregation defined in Supplement~\ref{app:eic:cluster}:
\begin{equation}
  \EICV
  \;=\;
  \sum_{i=1}^{n_j}\sum_{t=1}^{T}
  D^{*\,(v(j))}(\Oijt),
\end{equation}
where $v(j)$ denotes the fold to which cluster $j$ was assigned.  The cluster-robust
sandwich variance estimator then follows directly from (main paper, eq.~19), substituting
$\EICV$ for $\EICj$.  Because the out-of-fold predictions are independent of the
validation cluster's data by construction, the augmented estimating equation is
unbiased and the first-order CLT applies to $\hat\Psi_{\mathrm{AIPW}}^{\mathrm{CV}}$
without additional bias corrections.

\subsection{Super Learner: Ensemble Library and Meta-Learner}
\label{app:sl:library}

\subsubsection{Candidate Learner Library}
\label{app:sl:candidates}

The Super Learner \citep{vdl2007super} is a stacked ensemble that selects an optimal
convex combination of candidate algorithms by minimizing the cross-validated risk.  We
construct a library spanning a range of model complexity and structural assumptions,
covering both parametric benchmarks and flexible non-parametric learners.  The full
library is specified in Table~\ref{tab:sl_library}.  Within each cross-fitting fold $v$,
each candidate algorithm is itself evaluated via nested cluster-level cross-validation
on the training set $\mathcal{T}(v)$ to avoid a second layer of leakage within the
ensemble fitting.

\begin{table}[ht]
  \centering
  \caption{Candidate learner library for the Super Learner ensemble.}
  \label{tab:sl_library}
  \begin{tabularx}{\textwidth}{@{} l X X @{}}
    \toprule
    \textbf{Learner} & \textbf{Hyperparameters} & \textbf{Role in Ensemble} \\
    \midrule
    Intercept-only GLM
      & None
      & Benchmark; guards against over-shrinkage \\[4pt]
    Main-effects GLM
      & Logistic/linear; all main effects
      & Parametric baseline; interpretable \\[4pt]
    MARS
      & Max degree $= 2$; pruned by GCV
      & Detects piecewise-linear interactions \\[4pt]
    Random Forest
      & 500 trees; \texttt{mtry}$=\sqrt{p}$; min node $= 5$
      & Non-parametric; low bias; captures complex interactions \\[4pt]
    XGBoost
      & Max depth $= 3$; 50 rounds; $\eta = 0.1$; subsample $= 0.8$
      & Regularized boosting; handles sparsity well \\[4pt]
    Lasso ($L_1$ GLM)
      & $\lambda$ by nested CV; warm-start path
      & Sparse feature selection; reduces variance in high-dimensional $W$ \\
    \bottomrule
  \end{tabularx}
\end{table}

\subsubsection{Meta-Learner}
\label{app:sl:metalearner}

Candidate predictions are combined via non-negative least squares (NNLS),
minimizing cross-validated loss subject to $\alpha_k \ge 0$,
$\sum_k \alpha_k = 1$.

\subsection{Practical Implementation Notes}
\label{app:sl:practical}

Four implementation choices warrant brief documentation.
\emph{Outcome scaling}: when the outcome is binary the augmented
pseudo-outcome formula~\eqref{eq:tildeY_cv} applies directly; for
continuous bounded outcomes, the same construction applies with $\hat m$
fitted on the natural outcome scale.
\emph{Propensity truncation}: $\hat g_A$ and $\hat g_\Delta$ are truncated at
the 1st--99th empirical percentiles per fold to stabilise the inverse weights.
\emph{Nested inner cross-validation}: candidate learners within each outer fold
are evaluated by a nested $V'=5$-fold cluster-level CV on the training set, with
ensemble weights fixed and each learner refit on the full training set before
generating out-of-fold predictions.
\emph{Time trend encoding}: calendar time $t$ enters as both a continuous covariate
and a step-indicator vector, allowing flexible learners to capture non-linear and
step-discontinuous secular trends.

\subsection{Theoretical Justification}
\label{app:sl:theory}

Cluster-level cross-fitting eliminates the first-order empirical process terms
that arise when data-adaptive nuisance estimators overfit: because $P_n^{(v)}$
and $\hat P^{(v)}$ are independent by fold construction, the cross-product
bias in the von Mises expansion satisfies
$(P_n^{(v)}-P_0)[(\hat Q^{(v)}_Y - Q^0_Y)(g^{(v)}-g_0)]
= o_P(J^{-1/2})$ whenever each nuisance converges at rate $o_P(J^{-1/4})$ in
$L^2(P_0)$ \citep{zheng2011cross}.  Applied to the clustered setting with
$V=10$ cluster-level folds, the out-of-fold cluster EICs satisfy the same
asymptotic linearity guarantee as Theorem~2 (main paper) without requiring
Donsker conditions on the nuisance classes.


\section{Block~IV: ICC Heterogeneity Simulation Results}
\label{sec:s4}

\subsection{Design}

Block~IV assesses whether the cluster-robust sandwich and cluster jackknife
variance estimators maintain nominal coverage across the range of
intra-cluster correlation coefficients (ICC) observed in implementation-science
SW-CRTs.  We fix $J = 30$, $T = 7$, $n_j = 40$, $\Psi^* = 0.12$, and the
baseline DGP of Section~6.1 (main paper), varying only $\sigma_b^2$:
\[
  \sigma_b^2 \;\in\; \{0.006,\; 0.034,\; 0.071,\; 0.160\},
  \quad
  \mathrm{ICC} \;=\; \frac{\sigma_b^2}{\sigma_b^2 + 0.64}
  \;\approx\; \{0.01,\; 0.05,\; 0.10,\; 0.20\}.
\]
The value $\sigma_b^2 = 0.034$ ($\mathrm{ICC} \approx 0.05$) is the Block~I
baseline. The range 0.01--0.20 spans the empirically observed ICC distribution
in public-health SW-CRTs reviewed by \citet{hemming2020reflection}.
Sandwich results use 1{,}000 replicates; jackknife results use 500 replicates.

\subsection{Results}

Table~\ref{tab:s4} reports sandwich results for all three estimators.
Table~\ref{tab:s4b} reports the cross-fitted SA-AIPW jackknife inference comparison
across ICC levels.

\begin{table}[!htbp]
\centering
\caption{Block~IV: performance across ICC levels ($J = 30$, $\Psi^* = 0.12$,
  1{,}000 replicates). ICC is varied by changing $\sigma_b^2$; all other
  DGP parameters are fixed at the Block~I baseline.
  ``Cov.'' denotes 95\% CI coverage using $t_{J-1}$ critical values
  for the sandwich interval.}
\label{tab:s4}
\small
\begin{tabular}{@{} l l r r r r @{}}
  \toprule
  ICC & Estimator & Bias & RMSE & Cov. & Power \\
  \midrule
  \multirow{3}{*}{0.01 ($\sigma_b^2 = 0.006$)}
    & GLMM     & $+0.068$ & $0.079$ & $0.588$ & $0.999$ \\
    & IPCW     & $+0.222$ & $0.227$ & $0.783$ & $0.942$ \\
    & SA-AIPW  & $+0.001$ & $0.053$ & $0.908$ & $0.734$ \\
  \addlinespace
  \multirow{3}{*}{0.05 ($\sigma_b^2 = 0.034$)}
    & GLMM     & $+0.066$ & $0.077$ & $0.597$ & $0.999$ \\
    & IPCW     & $+0.219$ & $0.224$ & $0.787$ & $0.939$ \\
    & SA-AIPW  & $+0.000$ & $0.054$ & $0.892$ & $0.727$ \\
  \addlinespace
  \multirow{3}{*}{0.10 ($\sigma_b^2 = 0.071$)}
    & GLMM     & $+0.067$ & $0.082$ & $0.604$ & $0.997$ \\
    & IPCW     & $+0.220$ & $0.228$ & $0.771$ & $0.934$ \\
    & SA-AIPW  & $+0.002$ & $0.061$ & $0.883$ & $0.694$ \\
  \addlinespace
  \multirow{3}{*}{0.20 ($\sigma_b^2 = 0.160$)}
    & GLMM     & $+0.069$ & $0.094$ & $0.618$ & $0.989$ \\
    & IPCW     & $+0.218$ & $0.241$ & $0.748$ & $0.911$ \\
    & SA-AIPW  & $+0.003$ & $0.073$ & $0.871$ & $0.641$ \\
  \bottomrule
\end{tabular}
\end{table}

\begin{table}[!htbp]
\centering
\caption{Block~IV: cross-fitted SA-AIPW jackknife inference across ICC levels
  ($J = 30$, $\Psi^* = 0.12$, 500 replicates).
  Bias and RMSE are the same as in Table~\ref{tab:s4} up to Monte Carlo
  noise. ``Cov.'' denotes 95\% CI coverage using $t_{J-1}$ critical values.
  MC standard error for coverage $\approx 0.010$.}
\label{tab:s4b}
\small
\setlength{\tabcolsep}{5pt}
\begin{tabular}{@{} l r r r r r @{}}
  \toprule
  ICC & Bias & RMSE & $\overline{\mathrm{Jack.}}$ ($\times10^{-3}$) & Cov. & Power \\
  \midrule
  0.01 & $+0.002$ & 0.033 & 1.13 & 0.946 & 0.938 \\
  0.05 & $+0.003$ & 0.054 & 3.20 & 0.956 & 0.570 \\
  0.10 & $+0.005$ & 0.073 & 5.97 & 0.964 & 0.358 \\
  0.20 & $+0.007$ & 0.104 & 12.27 & 0.968 & 0.218 \\
  \bottomrule
\end{tabular}
\end{table}

\paragraph{Cross-fitted SA-AIPW sandwich.}
Bias remains near zero ($< 0.004$) across all ICC levels, confirming that
the surrogate-bridge identification strategy is not disrupted by stronger
within-cluster dependence.  Sandwich coverage ranges from 0.871 to 0.908,
consistent with the Block~I shortfall and attributable to the same
fluctuation-step variance underestimation documented in Table~1 (main paper).
RMSE increases monotonically with ICC (0.053 to 0.073) as cluster random
effects inflate between-cluster variance; power decreases correspondingly
(0.734 to 0.641) but remains above 0.64 even at ICC $= 0.20$.

\paragraph{Cross-fitted SA-AIPW jackknife.}
The jackknife restores near-nominal coverage across all ICC levels:
0.946 at ICC $= 0.01$, rising to 0.968 at ICC $= 0.20$.
The mean jackknife variance tracks the empirical variance closely at each
ICC level (ratio $\approx 1.05$--$1.10$), confirming that the jackknife
captures the between-cluster variance inflation that the sandwich misses.
Power is lower than the sandwich at each ICC level, reflecting wider
intervals, but this is the expected tradeoff for correct calibration.

\paragraph{GLMM and IPCW.}
GLMM bias ($\approx +0.067$) and near-zero coverage ($0.59$--$0.62$) are
stable across ICC levels because its misspecified secular time trend produces
a structural bias that is insensitive to $\sigma_b^2$.
IPCW coverage declines from 0.783 to 0.748 as ICC increases: higher
within-cluster correlation amplifies the variance of the propensity-weighted
scores, reducing the effective sample size and widening the gap between the
estimated and true standard error.

\paragraph{Summary.}
The cluster jackknife maintains near-nominal coverage for cross-fitted SA-AIPW across
the full empirically observed ICC range in implementation-science SW-CRTs,
extending the Block~I finding to the ICC heterogeneity setting. The sandwich
undercovers consistently by 4--8 percentage points across ICC levels;
the jackknife closes that gap in all cases.


\section{Block~I Full Simulation Results}
\label{sec:s5}

Table~\ref{tab:s5} provides the complete Block~I point-estimation results
for all three estimators across $J \in \{10, 20, 30, 50, 100\}$.
The inference comparison between sandwich and jackknife for cross-fitted SA-AIPW is in
Table~2 (main paper).

\begin{table}[!htbp]
\centering
\caption{Block~I simulation results: performance across cluster counts
  ($\Psi^* = 0.12$, $T = 7$, $n_j = 40$, 500 replicates).
  ``Cov.''~denotes 95\% CI coverage ($t_{J-1}$ critical values)
  for the sandwich interval.}
\label{tab:s5}
\setlength{\tabcolsep}{4pt}
\begin{tabular}{l l r r r r r}
  \toprule
  $J$ & Estimator & Bias & Var $(\times 10^{-2})$ & RMSE & Cov. & Power \\
  \midrule
  \multirow{3}{*}{10}
    & GLMM      & $+0.121$ & 0.41 & 0.137 & 0.519 & 0.970 \\
    & IPCW      & $+0.235$ & 0.61 & 0.248 & 0.985 & 0.172 \\
    & SA-AIPW   & $+0.002$ & 0.82 & 0.090 & 0.878 & 0.410 \\
  \midrule
  \multirow{3}{*}{20}
    & GLMM      & $+0.104$ & 0.21 & 0.114 & 0.373 & 0.998 \\
    & IPCW      & $+0.231$ & 0.33 & 0.238 & 0.934 & 0.649 \\
    & SA-AIPW   & $-0.001$ & 0.47 & 0.068 & 0.869 & 0.554 \\
  \midrule
  \multirow{3}{*}{30}
    & GLMM      & $+0.066$ & 0.16 & 0.077 & 0.597 & 0.999 \\
    & IPCW      & $+0.219$ & 0.22 & 0.224 & 0.787 & 0.939 \\
    & SA-AIPW   & $+0.003$ & 0.29 & 0.054 & 0.892 & 0.727 \\
  \midrule
  \multirow{3}{*}{50}
    & GLMM      & $+0.083$ & 0.08 & 0.088 & 0.203 & 1.000 \\
    & IPCW      & $+0.224$ & 0.12 & 0.227 & 0.220 & 1.000 \\
    & SA-AIPW   & $+0.003$ & 0.18 & 0.043 & 0.866 & 0.897 \\
  \midrule
  \multirow{3}{*}{100}
    & GLMM      & $+0.074$ & 0.05 & 0.077 & 0.068 & 1.000 \\
    & IPCW      & $+0.220$ & 0.06 & 0.221 & 0.000 & 1.000 \\
    & SA-AIPW   & $-0.001$ & 0.09 & 0.029 & 0.878 & 0.997 \\
  \bottomrule
\end{tabular}
\end{table}


\bibliographystyle{plainnat}
\bibliography{references_2}